\documentclass[11pt]{article}
\usepackage{graphicx} 
\usepackage{amsmath, amsfonts, amssymb, amsthm,xcolor}
\usepackage{fullpage}
\usepackage[colorlinks=true,citecolor=blue,linkcolor=blue]{hyperref}
\usepackage{thm-restate}
\hypersetup{
    colorlinks,
    linkcolor={red!50!black},
    citecolor={blue!50!black},
    urlcolor={blue!80!black}
}
\usepackage[nameinlink, noabbrev, capitalize]{cleveref}
\usepackage[
backend=biber,
style=alphabetic,
sorting=nyt,
maxnames=7,
maxalphanames=7,
backref=true,
url=false
]{biblatex}
\usepackage{float}
\usepackage{footnote}
\makesavenoteenv{tabular}

\newcommand{\zo}{\{0,1\}}

\newcommand{\eps}{\varepsilon}
\newcommand{\abs}[1]{\left\lvert #1 \right\rvert}

\DeclareMathOperator{\supp}{Supp}

\newtheorem{counter}{Counter}[section]
\newtheorem{theorem}[counter]{Theorem}

\newtheorem{lemma}[counter]{Lemma}

\newtheorem{claim}[counter]{Claim}

\newtheorem{definition}[counter]{Definition}

\newcommand{\N}{\mathbb{N}}

\newcommand{\F}{\mathbb{F}}

\newcommand{\E}{\mathbb{E}}
\newcommand{\poly}{\operatorname{poly}}

\newcommand{\U}{\mathbf{U}}

\newcommand{\X}{\mathbf{X}}

\newcommand{\Y}{\mathbf{Y}}

\newcommand{\cF}{\mathcal{F}}

\newcommand{\Ext}{\mathsf{Ext}}
\newcommand{\Disp}{\mathsf{Disp}}
\newcommand{\Cond}{\mathsf{Cond}}
\newcommand{\LExt}{\mathsf{LExt}}
\newcommand{\sExt}{\mathsf{sExt}}
\newcommand{\pre}{\mathsf{pre}}
\newcommand{\out}{\mathsf{out}}
\newcommand{\Nbr}{\mathsf{Nbr}}
\newcommand{\Bad}{\mathsf{Bad}}
\newcommand{\Far}{\mathsf{Far}}
\newcommand{\minH}{H_\infty}
\renewcommand{\P}{\mathsf{P}}
\newcommand{\NP}{\mathsf{NP}}
\newcommand{\BPP}{\mathsf{BPP}}
\newcommand{\OR}{\mathsf{OR}}
\newcommand{\AND}{\mathsf{AND}}

\DeclareMathOperator*{\argmax}{\arg\!\max}

\title{Sumset Structure in Local Computation}
 
 \author{
	Alexander Golovnev%
	\thanks{Cornell University. This research is supported by the National Science Foundation CAREER award (grant CCF-2338730). Email: \texttt{alexgolovnev@gmail.com}.}
	\and
	Mohit Gurumukhani%
    \thanks{Cornell University. This research is supported by a Sloan Research Fellowship, the National Science Foundation CAREER Award (grant CCF-2045576), and the National Science Foundation Award CCF-2514586. Email: \texttt{mgurumuk@cs.cornell.edu}.} 
}
\date{}
\begin{document}

\maketitle

\begin{abstract}
We introduce and study a new model of decision trees that lies at the frontier of provable circuit lower bounds.

We study \emph{$\ell$-local decision trees}, in which each internal node queries an $\ell$-local function of the input. We show that sufficiently strong lower bounds for $\ell$-local decision trees would imply several breakthrough circuit lower bounds, including super-linear size lower bounds for log-depth circuits and improved bounds for unrestricted-depth circuits. Previously, such consequences were known to follow from strong lower bounds for depth-$3$ circuits, which has been the main prior approach in attempting to prove these results. Since local decision trees are strictly weaker than depth-$3$ circuits, this provides a formally easier route to the same circuit lower-bound consequences.

We prove essentially optimal lower bounds for a weaker variant that we call \emph{oblivious $\ell$-local decision trees}, where all nodes at the same depth query the same function. 
Our lower bounds follow from a new technique that uncovers sumset structure in local maps, and our hard functions are sumset dispersers, sumset condensers, and directional affine dispersers. Along the way, we give an explicit construction of a sumset condenser with small entropy loss.

As an additional contribution, we initiate the study of \emph{degree-$2$ decision trees}, in which each query is a quadratic polynomial of the input. We show that proving lower bounds in this model is a natural stepping stone toward constructing dispersers for degree-$2$ variety sources, which imply improved circuit lower bounds for unrestricted depth circuits. We prove nearly maximal lower bounds for the oblivious variant of the model.
\end{abstract}

\pagenumbering{roman}
\thispagestyle{empty}
\newpage
\setcounter{tocdepth}{2}
\tableofcontents
\newpage
\pagenumbering{arabic}

\section{Introduction}
The central goal of circuit complexity is to prove lower bounds on the complexity of an explicit function in various models of computation. A counting argument shows that almost all functions require very large complexity in most computational models. Yet it is notoriously hard to find a polynomial-time computable function which exhibits a non-trivially large complexity. The lack of non-trivial lower bounds is used as an explanation for why progress has been stalled in other areas of complexity, such as proving $\BPP = \P$ or proving $\P \ne \NP$. 

Directly working with various circuit models in the hope of proving lower bounds is often quite difficult since it is hard to analyze them. Therefore, one often reduces the task of proving lower bounds against various circuit models to another task such as: 1) proving lower bounds for other computational models which are seemingly easier to analyze such as depth-3 circuits or the number-on-forehead model; 2) constructing explicit pseudorandom objects such as rigid matrices or dispersers for variety sources. Despite intense efforts to: improve these reductions, prove better lower bounds for seemingly simpler models, and to construct pseudorandom objects, there has been very little progress in obtaining these lower bounds. 

Motivated by this, we initiate the systematic study of \emph{local decision trees}, a natural model of computation where proving strong lower bounds yields many strong circuit lower bounds.\footnote{This model was first studied by Chistopolskaya and Podolskii and they focused on $2$-local decision trees, and their connections to communication complexity \cite{CP22}.} 
Our model has rich connections to many other models of computation and we will show that proving lower bounds against local decision trees is an easier task: that lower bounds against simple circuit models as well as explicit constructions of pseudorandom objects will yield lower bounds against these trees. We will then prove nearly maximal lower bounds against a slightly weakened version of local decision trees by finding additive structure in local computation. The hard functions for which our lower bound will hold will be sumset dispersers and condensers, which either we already know explicitly how to construct or we construct in this work for our task. 

Let us first define these trees. An \emph{$\ell$-local decision tree} over $n$ variables is a binary tree where each internal node queries an $\ell$-local function of the input and based on that branches left or right, until it reaches a leaf which is labeled with the outcome $0$ or $1$. We will prove strong lower bounds against ``oblivious'' versions of these decision trees where all nodes at the same level compute the exact same function. An \emph{oblivious $\ell$-local decision tree} $T$ of depth $d$ can also be characterized by an $\ell$-local map $L: \zo^n \to \zo^d$ and a function $h: \zo^d \to \zo$ of arbitrary complexity such that $T(x) = h(L(x))$. We also naturally extend these definitions to functions that output $m$ bits by letting each leaf be labeled by an outcome from $\zo^m$. 

We additionally study degree-$d$ decision trees---decision trees where at each node, the function queried is a degree-$d$ multilinear polynomial (over $\F_2$) of the input bits. We prove maximal lower bounds against oblivious degree-$2$ decision trees using directional affine dispersers. We show that proving these lower bounds and similar maximal lower bounds against (non-oblivious) degree-$2$ decision trees are necessary in order to construct dispersers for variety sources and obtain improved circuit lower bounds.

\subsection{Prior and Related work}

We mention a related line of work that seeks to find algebraic structure in local computation: This arises in extractor theory, where the goal is to transform entropic distributions generated by local computations into uniform distribution \cite{DW12, V14, OCGLR22}.

We now summarize several prior lower bounds for local decision trees that are immediate from prior work, or that have previously appeared in the literature.

We begin with a simple observation that for any function that depends on all $n$ variables, the depth of an $\ell$-local decision tree computing it is at least $\frac{n}{\ell}$.
Therefore, to measure the non-triviality of a function, we introduce the following quantity, defined for both oblivious and non-oblivious local decision trees.
\begin{definition}[Depth overhead]
For a function $f: \zo^n \to \zo^m$, we define its (oblivious) \emph{depth overhead} as
\[
    \max_{1\le \ell\le n}\; \frac{d_{\ell}(f)}{(n / \ell)} \;,
\]
where $d_{\ell}(f)$ is the minimum depth of an (oblivious) $\ell$-local decision tree computing it. 
\end{definition}
This quantity measures the multiplicative overhead over the trivial bound of $n / \ell$ and we will use it as a benchmark to compare and contextualize results across different values of $\ell$.\footnote{Lecomte et al. also introduced this quantity as composition overhead, which is equivalent to depth overhead for oblivious local decision trees \cite{LRT22}.}

\paragraph{Lower bounds against (non-oblivious) $\ell$-local decision trees.}
We can obtain lower bounds against $\ell$-local decision trees using the connection to depth-3 circuits mentioned previously.
The best lower bound against $\Sigma_3^{\ell}$ against an explicit function is of size $2^{c n / \ell}$ where $c \simeq 1.282$~\cite{PPSZ05} and only for $\ell = 2$, we have $2^{n - o(n)}$ lower bounds~\cite{PSZ00}. The explicit functions one can use in both cases are indicators of good codes. Also in the last case, the hard explicit function can be an affine disperser for min-entropy $o(n)$.
These lower bounds directly yield depth lower bounds of size $c n / \ell$ and $n - o(n)$ for $\ell$-local and $2$-local decision trees, respectively. 

Chistopolskaya and Podolskii studied the model of $2$-local decision trees and showed that any $2$-local decision tree computing Majority must have depth at least $n - o(n)$ \cite{CP22}.

Hence, the best depth overhead for $\ell$-local decision trees is $O(1)$.

\paragraph{Lower bounds against oblivious $\ell$-local decision trees.}

Oblivious $\ell$-local decision trees were studied (under different terminology) in the works of Hrube\v{s} and Rao, Lecomte et al., as well as Bessonov and Podolskii \cite{HR15, LRT22, BP26}. Hrube\v{s} and Rao showed that there exists an explicit function such that any oblivious $(1-\gamma) n$-local decision tree computing it, requires depth at least $n^{c \gamma}$ where $\gamma \in (0, 1)$ is an arbitrary constant and $c \simeq 0.15$ is a universal constant \cite{HR15}. Hence, this function has depth overhead of about $n^{0.15}$. Their proof strategy was inspired by Nechiporuk's lower bound on formula size \cite{N66}.

Lecomte et al. showed that any oblivious $\ell$-local decision tree computing the Majority function requires depth at least $\frac{n}{\ell}\cdot \log(\ell)$ (provided $\ell \le n^{0.99}$), obtaining depth overhead of $\log(n)$ \cite{LRT22}. They used information-theoretic arguments to obtain this lower bound. Using the connection to oblivious branching programs, they recovered the classical lower bound \cite{AM86, BPRS90} that any constant-width oblivious branching program computing Majority requires length $\ge \Omega(n \log(n))$ using vastly different techniques. 

Bessonov and Podolskii~\cite{BP26} proved that a special case of oblivious $3$-local decision trees computing Majority also require depth $n-o(n)$.\footnote{In this special case, the functions computed in the nodes of the oblivious decision tree are depth-$2$ decision trees themselves.}

\subsection{Our results}

Our main conceptual contribution is that the task of proving lower bounds for many varied circuit models can be reduced to proving lower bounds against local decision trees. And for many such models, it is necessary to do so.
We provide these connections in detail in \cref{subsec: Connections to other circuit models}.
We view our strong lower bounds against oblivious local decision trees as progress towards the task of proving such strong lower bounds against local decision trees.

We obtain lower bounds in 3 different settings: 1) Against oblivious local decision trees outputting 1 bit. 2) Against oblivious local decision trees outputting many bits. 3) Against oblivious degree-2 decision trees outputting 1 bit. The hard functions for our lower bounds in these three settings are: sumset dispersers, sumset condensers, and directional affine dispersers, respectively. 
As far as we are aware, this is the first instance of a lower bound that uses the full powers of these pseudorandom objects to make progress towards frontier circuit lower bounds.\footnote{Directional affine dispersers and sumset dispersers have been used to obtain lower bounds against linear branching programs \cite{GPT22, CL23, LZ24b, LZ24}.}
Our lower bounds will be obtained by finding sumset structure in local computation. 
Such a proof technique has not appeared in previous works and we hope it leads to interesting lower bounds against (non-oblivious) local decision trees and further circuit lower bounds.

We now present our results in each of the three settings below.

\paragraph{Single-output oblivious local decision trees.}

Our main result in this setting is as follows.
\begin{restatable}{maintheorem}{MainThmSingleOutput}\label{thm:MainSingle}
There exists a polynomial-time computable function $\Disp: \zo^n \to \zo$ and a constant $C>0$ such that any oblivious $\ell$-local decision tree computing $\Disp$ has depth at least
\[
 \frac{n^2}{n+C\ell^2\log(n)}\;.
\]
In particular, for $\ell=o(\sqrt{n/\log(n)})$, any oblivious $\ell$-local decision tree computing $\Disp$ has depth $n-o(n)$.
\end{restatable}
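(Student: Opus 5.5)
The plan is to take $\Disp$ to be an explicit sumset disperser: a function that is non-constant on $A+B$ for any $A,B\subseteq \F_2^n$ with $|A|,|B|\ge 2^{k}$, where $k$ is polylogarithmic (or even $O(\log n)$). I would assume such explicit constructions from the literature, for instance the sumset extractors and dispersers of Chattopadhyay--Liao type, which handle min-entropy $k=O(\log n)$ or polylog$(n)$. The proof then reduces to a structural statement about local maps. If $T(x)=h(L(x))$ with $L:\zo^n\to\zo^d$ $\ell$-local, then some fiber $L^{-1}(y)$ contains a sumset $A+B$ with $|A|,|B|\ge 2^{k}$. Since $T$ is constant on fibers, $\Disp$ would be constant on $A+B$, a contradiction.

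\textbf{Finding sumsets in fibers.} Each output bit $L_i$ reads a set $S_i$ of at most $\ell$ inputs. I would pick two disjoint sets of coordinates $P,Q\subseteq[n]$ such that no $S_i$ meets both $P$ and $Q$. Fix the remaining coordinates $z$ on $[n]\setminus(P\cup Q)$. Then every output bit depends only on $(x_P,z)$ or only on $(x_Q,z)$, so $L(x_P,x_Q,z)=(L'(x_P),L''(x_Q))$ after fixing $z$. A fiber of $L$ is then $\{z\}\times U\times V$, where $U$ is a fiber of $L'$ and $V$ is a fiber of $L''$. This is exactly the sumset $(U\times 0 + z)+(0\times V)$. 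By pigeonhole over the output values, $|U|\ge 2^{|P|-d_P}$, where $d_P$ is the number of output bits touching $P$; similarly $|V|\ge 2^{|Q|-d_Q}$. So I need $|P|-d_P\ge k$ and $|Q|-d_Q\ge k$ while $d_P+d_Q\le d$ and the two groups of output bits do not overlap.

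\textbf{Choosing $P,Q$ (main obstacle).} Form the graph on $[n]$ where $u\sim v$ if some $S_i$ contains both. Each $S_i$ contributes at most $\ell^2$ edges, so there are at most $d\ell^2$ edges in total. I need two sets $P,Q$ with no edges between them, such that the inputs not in $P\cup Q$ (the fixed $z$) are few, and such that the surplus $|P|+|Q|-d$ splits as at least $k$ on each side. I would try a random split: put each vertex independently into $P$ or $Q$, then delete from $P$ every vertex with a neighbor in $Q$ (moving it to $z$). The expected number of deleted vertices is at most the number of edges, about $d\ell^2$, but this is far too many when $d\approx n$. The better approach is to make $P$ small: choose $P$ to be a set of $t\approx 2k$ vertices of low degree, and let $Q$ be everything outside $P\cup N(P)$. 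Output bits touching $P$ number at most $t\ell$ in the worst case, which breaks the bound $|P|-d_P\ge k$. So instead I would compare with bits rather than vertices: count the bits $i$ with $S_i\cap P\neq\emptyset$, and choose $P$ so that these bits number at most $|P|-k$. Averaging over all $\binom{n}{t}$ candidate sets $P$ (the average degree is $d\ell/n$ bits per variable) and using $d<n$ suggests that taking $P$ to be a random set of a few-times-$k$ variables does not obviously work. The alternative is to fix $z$ strategically, so that many bits become constant. I expect the real argument to mix the two approaches: fixing $N(P)$, which has size at most $t\ell^2$ roughly, reduces $Q$'s budget by $t\ell^2$. The requirement is then roughly $n - d - t\ell^2 \ge k$ with $t\approx k\approx \log n$. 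Rearranging $d\le n-C\ell^2\log n$ gives the bound in the form $n^2/(n+C\ell^2\log n)\approx n - C\ell^2\log n$, which matches the statement. So the bookkeeping target is: choose $P$ of size $O(\log n)$ such that $|P|$ exceeds the number of bits reading $P$ by $\log n$. Then $Q=[n]\setminus(P\cup N(P))$ must exceed the remaining bits by $\log n$, which holds when $d\le n-C\ell^2\log n$. Guaranteeing the first condition (a small set $P$ with entropy surplus) is the step I expect to be hardest; I would attempt it by induction/greedy restriction, fixing variables that are read by many bits.
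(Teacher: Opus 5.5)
There is a genuine gap, and it is exactly at the step you flag as hardest. You insist that $A+B$ lie in a single fiber of $L$. That forces the small side to carry an entropy surplus $|P|-d_P\ge k$, and for $|P|=O(\log n)$ this is false in general, not merely hard.

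Here is a counterexample. Take $\ell=2$, bits $L_i(x)=x_i\oplus x_{i+1}$ (indices mod $n$), and delete $\Theta(\log n)$ evenly spaced bits, so $d=n-\Theta(\log n)$. Suppose $|P|$ is smaller than the spacing, and split $P$ into maximal runs of consecutive indices.
\begin{itemize}
\item A run of length $r$ is read by $r+1$ bits, at most one of which was deleted.
\item Distinct runs are read by distinct bits.
\item Hence $d_P\ge|P|$.
\end{itemize}
Worse, once the neighbours of $P$ are fixed, each run is recovered by propagating XORs inward from its two ends. So $L'$ is injective and every fiber $U$ is a singleton, whatever $z$ is. Greedy fixing of variables cannot rescue this. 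In general $\ell\, d_P\ge\sum_{v\in P}\deg(v)$, so a surplus on $P$ requires $P$ to collect $\ell k$ units of degree deficit, and that deficit can be spread thin.

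The missing idea is that you never need $L$ to be constant on $A+B$. It suffices that $L(a+b)=L(a)$ for all $a\in A$, $b\in B$; the fact that $h$ outputs one bit then finishes the job. The paper's construction is as follows.
\begin{itemize}
\item Take $I$ with $|I|=k+1$, set $S=N(I)\setminus I$, and let $R$ be the remaining variables.
\item Fix $x_S=0$. Let $Y(z^*)$ be a largest fiber of the $(R\cup S)$-part $Q_2(\cdot,0^{|S|})$, and pick $y^*\in Y(z^*)$.
\item Set $A=\{0^{|S|}\}\times\zo^{I}\times\{y^*\}$. This is the \emph{full} cube on $I$, no matter how many bits read $I$.
\item Set $B=\{0^{|S|}\}\times\{0^{|I|}\}\times(Y(z^*)+y^*)$.
\end{itemize}
Bits inside $I\cup S$ do not see $B$, and bits inside $R\cup S$ stay equal to $z^*$, so $L(a+b)=L(a)$. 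Splitting $A$ by the value $h(L(a))\in\zo$ gives a class $A_c$ of size at least $2^k$ with $T\equiv c$ on $A_c+B$. No surplus is needed on the $I$ side; all $d$ bits are charged to $R$ through $|B|\ge 2^{|R|-d}$.

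There are also two quantitative problems.
\begin{itemize}
\item Your separator estimate $t\ell^2$ only yields depth $n-O(\ell^2\log n)$. This is strictly weaker than $n^2/(n+C\ell^2\log n)$ and vacuous once $\ell\gtrsim\sqrt{n/\log n}$, whereas at $\ell=\sqrt n$ the statement still gives about $n/(C\log n)$. You need the average degree instead. Some $I$ of size $k+1$ is read by at most $(k+1)d\ell/n$ bits, so only about $(k+1)d\ell^2/n$ variables share a bit with $I$. The resulting condition, essentially $d(1+(k+1)\ell^2/n)+k\le n$, rearranges to the stated bound.
\item The $\log n$ in the statement requires a sumset disperser for $k=O(\log n)$ (Li's construction \cite{L23}). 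Polylogarithmic $k$ would only give $\ell^2\,\mathrm{polylog}(n)$ in the denominator.
\end{itemize}
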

Thus, our depth overhead here is $\sqrt{n / \log(n)}$, much larger than the depth overhead obtained in  previous works by Hrube\v{s} and Rao as well as Lecomte et al. 
Our explicit hard function $\Disp$ is a sumset disperser and has the property that for any $A, B\subseteq \zo^n$ with $\abs{A}, \abs{B}\ge \poly(n)$, it holds that $\Disp(A + B)$ is non-constant. Such a function $\Disp$ (with even stronger guarantees) was recently constructed by Li \cite{L23}. These dispersers strictly generalize affine dispersers which were used as the hard function in proving unrestricted-depth circuit lower bounds \cite{DK11,FGHK16,LT22}. 
As far as we are aware, this is the first instance of a lower bound that uses the full power of a sumset disperser.

\paragraph{Multi-output oblivious local decision trees.}

We first mention that for functions that output $m$ bits and whose image is $\zo^m$, we trivially obtain the lower bound of $m$ on the depth of a local decision tree computing it. Our result will have the property that $d = \omega(m)$ and hence it will be non-trivial. 

We now state our nearly maximal lower bounds in this setting.
\begin{restatable}{maintheorem}{MainThmMultipleOutput}\label{thm:MainMultiple}
For every $\ell\le n / \poly(\log(n))$, there exists $m=\ell \cdot\poly(\log(n))$ and a polynomial-time computable function $\Cond: \zo^n \to \zo^m$ such that any oblivious $\ell$-local decision tree computing $\Cond$ has depth $n-o(n)$.
\end{restatable}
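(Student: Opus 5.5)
The plan is to take $\Cond$ to be an explicit \emph{sumset condenser}: for every $A,B\subseteq\zo^n$ with $\abs{A},\abs{B}\ge 2^{k}$ (for a suitable $k=o(n)$, e.g.\ $k=n/\poly(\log n)$ adjusted to $\ell$), the image $\Cond(A+B)$ has size at least $2^{m-g}$ with a small entropy gap $g$. This is the condenser with small entropy loss that the abstract promises. The lower bound then has two halves: a \emph{structural} lemma showing that any oblivious $\ell$-local map of small depth $d$ has a fiber that contains a large sumset, and a \emph{counting} step showing that $\Cond$ cannot take few values on such a fiber.

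\textbf{Step 1 (sumset structure in fibers).} Write the tree as $T(x)=h(L(x))$ with $L:\zo^n\to\zo^d$ $\ell$-local. Build a bipartite dependency graph between output bits and input bits. Since each output reads at most $\ell$ inputs, a greedy or random partition splits the inputs into two sets $S_1,S_2$ so that many outputs depend only on $S_1$ or only on $S_2$. With fewer than $\ell$ outputs per variable on average, one can also restrict to a set $R$ of "mixed" outputs touching both sides, of total count about $d\cdot(\text{fraction})$.

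Fix the variables touched by mixed outputs, or rather fix the values of the mixed outputs. Then within a fiber we get $x=(y,z)$ with $y$ ranging over a set $A'$ (determined by the $S_1$-only outputs) and $z$ over a set $B'$ independently. Embedding $A=A'\times 0$ and $B=0\times B'$, the fiber contains $A+B$. Counting gives $\abs{A}\abs{B}\ge 2^{n-d}$. The key requirement is a partition with $\abs{A},\abs{B}$ each at least $2^k$, which needs $n-d\ge 2k$ plus slack from the mixed outputs.

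\textbf{Step 2 (information bound).} $T$ computes $\Cond$, so $\Cond(x)$ is determined by $L(x)$. I would refine the argument: consider a random fiber and count the total range. On the sumset $A+B$ inside one fiber of $L$, the value $\Cond$ is constant, contradicting $\abs{\Cond(A+B)}\ge 2^{m-g}>1$. That alone is too weak, since it only forces $d\ge n-2k$. To get $n-o(n)$ I would instead use the slack: $d\ge n-2k-O(\#\text{mixed bits})$, and with $k=o(n)$ and few mixed bits the bound is $n-o(n)$.

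The role of $m=\ell\cdot\poly\log n$ is that partitioning under $\ell$-locality forces $k$ to scale with $\ell$. A disperser-type guarantee fails, so I would use condensing to lose only $\ell\poly\log n$ bits: fibers of $L$ refined by $\Cond$ must each have size at most $2^{n-m+g}$ restricted to sumsets. Summing over the $2^d$ fibers then gives $d\ge m-g+(n-m)-o(n)$.

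\textbf{Main obstacle.} The hard step is Step 1 when $\ell$ is large: a balanced partition in which mixed outputs are few enough does not exist in general for $\ell$-local maps. I would first try a random partition of the variables into blocks of size about $\ell\poly\log n$ and iteratively fix variables of high-degree outputs, paying $\ell$ per output fixed. The second obstacle is constructing $\Cond$ explicitly with entropy gap $g=o(m)$ at min-entropy $k\approx\ell\poly\log n$. For this I would compose Li's sumset extractor techniques \cite{L23} with a seeded condenser, enumerating seeds.
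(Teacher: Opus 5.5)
\textbf{There is a genuine gap, and it is your Step 1.} You never establish that some fiber of an arbitrary $\ell$-local $L\colon\zo^n\to\zo^d$ contains a sumset $A+B$ with $\abs{A},\abs{B}\ge 2^k$, and the partition argument you sketch does not give it. Fixing the \emph{values} of the mixed outputs leaves the two halves coupled through those outputs, so the fiber is not a product. Fixing their \emph{variables} costs up to $\ell$ inputs per mixed output. Under a random split, an $\ell$-local output is mixed with probability about $1-2^{1-\ell}$, so for large $\ell$ essentially all $d$ outputs are mixed. For $\ell=n^{0.99}$ the neighbourhood of a single variable can already be all of $[n]$.

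Step 1 is also stronger than anything the paper proves. A single fiber containing $A+B$ would force even a one-bit sumset disperser (Li's, with $k=C\log n$) to need depth $n-o(n)$ for every $\ell\le n/\poly\log n$. The paper's separator argument for one output bit only reaches $\ell=o(\sqrt{n/\log n})$. Your Step 2 does not compensate: once $\Cond$ is constant on a sumset, condensing adds nothing, and the inequality from ``summing over the $2^d$ fibers'' is not derived from anything.

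\textbf{The missing idea} is to drop the single-fiber requirement and only ask that $L(A+B)$ be \emph{small}; this is exactly what a many-output condenser can exploit. Take $A=L^{-1}(y_0)$ for a largest fiber, so $\abs{A}\ge 2^{n-d}$. Let $B$ be the set of $b\in\zo^n$ whose support meets the variables of at most $\alpha$ of the functions $L_i$. Then $L(a+b)$ can differ from $y_0$ only in those $\alpha$ coordinates, so $\abs{h(L(A+B))}\le\binom{d}{\le\alpha}\le 2^{\alpha(\log(d/\alpha)+2)}$. An average variable occurs in at most $d\ell/n$ of the $L_i$, so Markov's inequality over random vectors of weight about $\alpha n/(d\ell)$ gives $\abs{B}\ge\binom{n}{\le\alpha n/(2d\ell)}$. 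This is $2^{\poly\log n}$ for $\alpha=\ell\log^{C}n$. No partition of the variables is needed, so nothing breaks for large $\ell$.

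The resulting sumset is very asymmetric, with entropies $n-d=n/\log\log n$ and $\poly\log n$. The choice $m=\ell\cdot\poly\log n$ is dictated by requiring $k_{\out}>\alpha(\log(d/\alpha)+3)$, not by the entropy of the sources as you suggest.

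For the explicit condenser, enumerating seeds only yields a somewhere-condensed source, not a single output with entropy gap $o(m)$. The paper instead uses a sumset extractor that is strong in the first source to produce the seed $\Y$ of a \emph{linear} seeded extractor applied to $x$ itself. Then $\LExt(\X_1+\X_2,\Y)=\LExt(\X_1,\Y)+\LExt(\X_2,\Y)$, where the first term is close to uniform and the second has support only $2^{\poly\log n}$. The argument finishes with a chain rule for smooth min-entropy.
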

Thus, our depth overhead here is $n / \poly(\log(n))$, which is nearly optimal. 
Our explicit hard function $\Cond$ is a condenser for sumset sources and has the property that for any $A, B\subseteq\zo^n$ where $\abs{A}\ge \abs{B} \ge 2^{\poly(\log(n))}$, the size of the image $\Cond(A + B)$ is at least $2^{m - o(m)}$.\footnote{These objects are also referred to as a disperser but in this work, we will only refer to dispersers as functions that have non-constant output image. Sumset condensers satisfy a stronger property but for our lower bounds, this property suffices. Moreover, our explicit construction of $\Cond$ will satisfy the standard properties of a condenser.}
Once again, as far as we are aware, this is the first instance of a lower bound that uses such pseudorandom properties of the hard function.
To get a sense of parameters, we will have that $\abs{A} \gg \abs{B}$ and $m \simeq 0.99\log(\abs{A})$. 

We also emphasize that a much weaker version of our result for (non-oblivious) local decision trees, provided $d = \omega(m)$, will lead to breakthrough lower bounds against Valiant's model and hence against frontier circuit families: For instance, a lower bound of $0.01 n$ on the depth for any growing $\ell = \omega(1)$ would suffice (see \cref{subsec: connection to valiant} for further details).

While we do have excellent extractors (whose output distribution is close to uniform) for sumset sources \cite{CL22, L23}, the number of bits outputted by these constructions for $(n, k_1, k_2)$ sumset sources (see \cref{def: sumset source}), when $k_1 \gg k_2$, is proportional to $k_2$ and not $k_1$. Therefore, as an additional result, we construct a sumset condenser that can get most of the entropy out of such sumset sources. See \cref{thm: explicit sumset condenser} for a formal statement. While constructing such a condenser, we needed to use chain rule for smooth min-entropy, a natural extension of the chain rule for min-entropy; we are not aware of an explicit reference for this and provide a self contained proof that we hope proves to be useful for future works (see \cref{lem: chain rule smooth min entropy} for details).

\paragraph{Oblivious degree-2 decision trees.}

In this setting, again, we obtain maximal lower bounds.
\begin{restatable}{maintheorem}{MainThmQuadratic}\label{thm:MainQuadratic}
There exists a polynomial-time computable function $\Disp: \zo^n \to \zo$ such that any oblivious degree-$2$ decision tree computing $\Disp$ has depth at least $n - O(\log(n))$.
\end{restatable}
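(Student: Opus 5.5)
Let $\Disp$ be an explicit directional affine disperser for min-entropy $k = O(\log n)$. This means that for every affine subspace $V \subseteq \zo^n$ of dimension at least $k$ and every nonzero direction $a \in \zo^n$, the function $x \mapsto \Disp(x) + \Disp(x+a)$ is non-constant on $V$. Such dispersers with logarithmic entropy are known explicitly from prior work on linear branching programs \cite{GPT22, CL23, LZ24b, LZ24}, and I will assume one of those constructions.

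Suppose an oblivious degree-$2$ decision tree of depth $d$ computes $\Disp$. Then $\Disp(x) = h(q_1(x),\dots,q_d(x))$ with each $q_i$ a quadratic polynomial over $\F_2$. The goal is to find an affine subspace $V$ of dimension about $n-d-O(1)$ and a nonzero $a$ with $\Disp(x)=\Disp(x+a)$ for all $x\in V$. If this is possible with $\dim V \ge k$, we contradict the directional disperser property, so $n - d - O(1) < k$, that is, $d \ge n - O(\log n)$.

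To find such a pair, I use the derivative structure of quadratics. For a fixed $a$, the derivative $D_a q_i(x) = q_i(x+a)+q_i(x)$ is affine in $x$. Define $V_a = \{x : D_a q_i(x) = 0 \text{ for all } i\}$. This is an affine subspace cut out by $d$ affine equations, so when nonempty it has dimension at least $n-d$. For every $x \in V_a$ we have $q_i(x+a)=q_i(x)$ for all $i$, hence $L(x+a)=L(x)$ and $\Disp(x+a)=\Disp(x)$. That gives the desired pair $(V_a,a)$ with constant derivative.

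The main obstacle is nonemptiness: each equation $D_a q_i(x)=0$ has the form $\langle M_i a, x\rangle + c_i(a) = 0$, where $M_i$ is the symmetric (alternating) bilinear form of $q_i$ and $c_i(a)=q_i(a)+q_i(0)$, and this system may be inconsistent. I will choose $a$ to avoid inconsistency. Take $a$ in the common kernel $\bigcap_i \ker M_i$, intersected with appropriate constraints. If that kernel has dimension more than $d$, some nonzero $a$ in it satisfies all $c_i(a)=0$, because on the kernel each $q_i$ restricts to an affine function. Then every $D_a q_i \equiv 0$ and $V_a=\zo^n$.

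Otherwise the $M_i$ have large combined rank, and I argue more carefully. First restrict to a subspace $W$ of codimension $d+1$ on which the linear functions $x \mapsto \langle M_i a, x\rangle$ can be handled. Concretely, pick any nonzero $a$ and consider the map $x \mapsto (D_a q_i(x))_i$. This map is affine with image an affine subspace $S$. If $0 \notin S$, replace the target: $\Disp(x+a)+\Disp(x)$ equals $h(L(x)+D_aL(x))+h(L(x))$, which is not directly constant. So instead I fix the value $D_aL(x)=s$ for some achievable $s\in S$ (costing $d$ affine constraints) and additionally fix $L(x)$'s dependence.

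Cleaner fallback: use the fibers of $D_a L$ together with a counting step. Restricting to $V_{a,s}=\{x: D_aL(x)=s\}$ does not make the derivative of $\Disp$ constant. Hence I would primarily aim to prove that some nonzero $a$ has $0\in S_a$. The linear part of $D_aq_i$ is $x\mapsto a^\top M_i x$, so the system $\{a^\top M_i x = c_i(a)\}_i$ is inconsistent only if some nontrivial combination $\sum_i \lambda_i M_i a = 0$ with $\sum_i\lambda_i c_i(a)=1$. Choosing $a$ from a subspace of dimension $d+1$ on which all quadratics restricted appropriately vanish is the delicate part. I expect to lose only $O(1)$ dimensions there, possibly by first restricting $x$ to a subspace where the quadratics become affine in one extra coordinate. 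I expect this consistency step to be the hardest part of the proof.
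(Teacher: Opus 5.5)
Your overall strategy matches the paper's. For a fixed direction $a$ the derivatives $D_a q_i$ are affine, so their common zero set $V_a$ has dimension at least $n-d$ whenever it is nonempty. On $V_a$ you get $\Disp(x)=\Disp(x+a)$, which contradicts the directional affine disperser property once $n-d\ge k$.

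There is, however, a genuine gap exactly where you flag it: you never show that some nonzero $a$ has $V_a\neq\emptyset$. The common-kernel argument only covers the case $\dim\bigcap_i\ker M_i>d$. For the complementary case you offer only abandoned attempts. Fixing $D_aL=s$ does not work, as you correctly note, because it does not make the derivative of $\Disp$ vanish. Beyond that you have only an unproved hope of losing $O(1)$ dimensions. As written, the argument does not establish the theorem.

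The missing idea is a one-line pigeonhole step, which makes all of the rank analysis unnecessary.
\begin{itemize}
\item If $d<n$, your quadratic map $L=(q_1,\dots,q_d)\colon\zo^n\to\zo^d$ is not injective, so there are $u\neq v$ with $L(u)=L(v)$.
\item Set $a=u+v\neq 0$. Then $D_aL(u)=L(u)+L(u+a)=L(u)+L(v)=0^d$, so $u\in V_a$.
\item Hence $V_a$ is a nonempty affine subspace cut out by $d$ affine equations, so it has dimension at least $n-d$, with no $O(1)$ loss.
\end{itemize}
A depth-$d$ tree with $d\le n-k$ therefore gives a subspace of dimension at least $k$ on which $\Disp(x)+\Disp(x+a)\equiv 0$, contradicting the disperser property. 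With the explicit directional affine disperser for $k=O(\log n)$, this yields depth at least $n-k+1=n-O(\log n)$.

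The point is to take the direction $a$ from a collision of the quadratic map, rather than trying to guarantee consistency of the linear system for an arbitrary or kernel-based $a$.
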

While there is no notion of depth overhead for these trees, we find it quite surprising that we can obtain such strong lower bounds, with such a small additive gap to $n$. 
Our hard function $\Disp$ here is a directional affine disperser which has the property that for all $a\ne 0^n$, the function $f_a(x) = \Disp(x) + \Disp(x + a)$ is an affine disperser for dimension $O(\log(n))$ (see \cref{def: directional affine disperser}). Such a disperser (with even stronger guarantees) was recently constructed by Li and Zhong \cite{LZ24}.

\subsection{Proof overview}\label{sec:overview}

In this section, we provide an overview of each of our lower bound arguments as well as the construction of the sumset condenser.

\subsubsection{Single-output oblivious local decision trees}
We sketch the proof of \cref{thm:MainSingle}. For clarity, we fix $\ell = n^{0.49}$ and prove that any $\ell$-local decision tree computing a \emph{sumset disperser} has depth $d \ge 0.99n$.
At a high level, we use a separator-based argument to find sumset structure inside the local computation.
As mentioned above, our explicit hard function $\Disp:\zo^n \to \zo$ is chosen so that for any $A,B \subseteq \zo^n$ with $\abs{A},\abs{B} \ge \poly(n)$, we have $\Disp(A+B)=\zo$.
Given a low-depth tree $T$ computing $\Disp$, we will construct $A$ and $B$ that violate this property.

Let $T$ be an oblivious $\ell$-local decision tree of depth $d=0.99n$ computing $\Disp$.
We can associate with $T$ an $\ell$-local map $L:\zo^n \to \zo^d$ and a function $h$ such that
\[
T(x)=h(L(x)) \qquad \text{for all } x\in \zo^n.
\]
We will construct sets $A,B$ with $\abs{A},\abs{B}\ge \poly(n)$ such that
\[
L(a)=L(a+b) \qquad \text{for all } a\in A,\; b\in B.
\]

\paragraph{Concluding the theorem from such $A$ and $B$.}
Partition $A$ into $A_0$ and $A_1$, where $a\in A_0$ if $h(L(a))=0$, and $a\in A_1$ otherwise. Without loss of generality, $\abs{A_0}\ge \abs{A}/2 \ge \poly(n)$. Then for every $a_0\in A_0$ and $b\in B$, $h(L(a_0+b)) = h(L(a_0)) = 0$, and, hence, $T(A_0+B)=h(L(A_0+B))=\{0\}$, contradicting that $T$ computes the sumset disperser $\Disp$.

\paragraph{Constructing $A$ and $B$.}
Let $\alpha = O(\log n)$. Order the $n$ input variables by the number of functions $L_i$ in which they appear, in non-decreasing order, and let $I\subseteq [n]$ be the set of the first $\alpha$ variables in this order. Since there are $d$ many $\ell$-local functions, the variables in $I$ appear in at most $\frac{\alpha d \ell}{n} \le \alpha \ell$ many functions $L_i$. In particular, there are at most $\alpha \ell^2 = o(n)$ variables that appear in some function $L_i$ together with a variable from $I$.
Let $S\subseteq [n]$ be the set of all such variables, note that $\abs{S}=o(n)$. And let $R = [n]\setminus (I\cup S)$ be the set of the remaining variables, satisfying $\abs{R} \ge n-\alpha-\alpha \ell^2 = n-o(n)$.

By construction, $S$ \emph{separates} $I$ from $R$: no function $L_i$ depends on variables from both $I$ and $R$ simultaneously.
Let $d_1$ be the number of output functions $L_i$ that depend only on variables from $I\cup S$, and let $Q_1:\zo^{\abs{I}+\abs{S}} \to \zo^{d_1}$ be the corresponding submap of $L$. Let $d_2 = d-d_1$, and let $Q_2:\zo^{\abs{R}+\abs{S}} \to \zo^{d_2}$ be the remaining submap, whose outputs depend only on variables from $R\cup S$. Hence, $Q_1, Q_2$ partition the output bits of $L$ and we can write $L(x) = (Q_1(x_I, x_S), Q_2(x_I, x_S))$.

Fix the variables in $S$ to $0^{\abs{S}}$, and define the restricted maps 
\begin{align*}
    Q_1' &= Q_1(\cdot,0^{\abs{S}})\colon\zo^{\abs{I}} \to \zo^{d_1} \;,\\
    Q_2' &= Q_2(\cdot,0^{\abs{S}})\colon\zo^{\abs{R}} \to \zo^{d_2} \;.
\end{align*}
Let $z^*\in \zo^{d_2}$ be a value with the largest preimage under $Q_2'$, and let
$Y(z^*) = \{\, y\in \zo^{\abs{R}} : Q_2'(y)=z^* \,\}$. We pick any arbitrary $y^*\in Y(z^*)$ and define our sets $A, B$ as follows:
Let $A = A_S\times A_I\times A_R$ and $B = B_S\times B_I\times B_R$, where 
\begin{align*}
A_S&=\{0^{\abs{S}}\}\;, & A_I&=\zo^{\abs{I}}\;, & A_R&=\{y^*\}\;;\\
B_S&=\{0^{\abs{S}}\}\;, & B_I&=\{0^{\abs{I}}\}\;, & B_R&=Y(z^*) + y^* \;.
\end{align*}

\paragraph{Verifying $L(a)=L(a+b)$.}
We claim that $L(a)=L(a+b)$ for all $a\in A$ and $b\in B$. Using the decomposition of $L$ into $Q_1$ and $Q_2$ and the fact that $A_S=B_S=\{0^{\abs{S}}\}$, it suffices to check the claim for $Q_1'$ and $Q_2'$. For our choices of $A$ and $B$, we indeed have
\begin{align*}
Q_1'(a+b)&=Q_1'(a+0^{|I|})=Q_1'(a) \;, \\
Q_2'(a+b)&=Q_2'(y^*+(y+y^*))=z^*=Q_2'(a)
\end{align*}
for a $y\in Y(z^*)$.

\paragraph{Sizes of $A$ and $B$.}
We now show $A$ and $B$ are large enough. First, $\abs{A} \ge 2^{\alpha} \ge \poly(n)$ by the choice of $\alpha=O(\log{n})$.
And second, $\abs{B} = \abs{Y(z^*)} \ge 2^{\abs{R} - d_2} \ge 2^{\abs{R} - d}$. 
From our assumption on $\abs{R}$, we infer that $\abs{B} \ge 2^{n - o(n) - d}$.
As $d = 0.99 n$, we conclude that $\abs{B} \ge 2^{0.01 n - o(n)} \ge \poly(n)$, concluding our result.

\subsubsection{Multi-output oblivious local decision trees}

We sketch the proof of \cref{thm:MainMultiple}. For clarity, we fix $\ell=n^{0.99}$ and $m=\widetilde{O}(\ell)$, and we prove that any $\ell$-local decision tree computing a \emph{sumset condenser} $\Cond\colon \zo^n \to \zo^m$ has depth $d\geq0.99n$. 

We use the following property of a sumset condenser. For any $A, B\subset \zo^n$ with $\abs{A}\geq 2^{\widetilde{\Omega}(\ell)}$ and $\abs{B} \geq 2^{\poly(\log(n))}$, it holds that $\abs{\Cond(A + B)} \ge 2^{m - o(m)}$.
We sketch our construction of an explicit sumset condenser $\Cond$ in \cref{subsec: proof overview condenser}

Let $T$ be an oblivious $\ell$-local decision tree of depth $d$ computing $\Cond$.
As before, we associate with $T$ an $\ell$-local map $L:\zo^n \to \zo^d$ and a function $h$ such that for all $x\in\zo^n$, $T(x)=h(L(x))$.
We will find large sets $A, B$ such that $L(A + B)$ has small image. This implies that  $T(A+B) = h(L(A+B))$ also has small image, contradicting the condenser property. 

\paragraph{Constructing $A$ and $B$.}
Let $z^*\in \zo^d$ be a value with the largest preimage under $L$, and let $A = \{\, a\in \zo^{n} : L(a)=z^* \,\}$. We have that $\abs{A}\geq2^{n-d}=2^{0.01n}$. On average, an input variable appears in at most $d\ell/n \le \ell$ many functions $L_i$.
For simplicity, assume that each variable appears in at most $\ell$ functions.  In this case, we let $B$ be the Hamming ball of radius $O(\poly(\log(n))$.\footnote{In the general case, we set $B$ to be an appropriate subset of this Hamming ball.}

\paragraph{Concluding the theorem from such $A$ and $B$.}
We claim that for any $y_0\in A + B$, the Hamming distance between $z^*$ and $L(y_0)$ is at most $\ell\cdot \poly(\log(n))$.
Showing this suffices to prove the theorem since then the output of $L(A + B)$ will lie within a ball of radius $r = \ell\cdot \poly(\log(n))$ around $z^*$, showing that  $\abs{T(A + B)}$ is at most $n^r = 2^{r\log(n)}\ll2^{m-o(m)}$, contradicting the condenser property.

\paragraph{Hamming distance between $z^*$ and $L(A+B)$.}
We now show that for any $y_0\in A + B$, the Hamming distance between $z^*$ and $L(y_0)$ is at most $\ell\cdot \poly(\log(n))$.
As $B$ is a Hamming ball of radius $\poly(\log(n))$, there exists some $y\in A$ such that the Hamming distance between $y$ and $y_0$ is at most $\poly(\log(n))$. Since we assumed that each input variable appears in at most $\ell$ functions, $L(y)$ and $L(y_0)$ differ in at most $\ell\cdot \poly(\log(n))$ positions, and so the Hamming distance between $z^*=L(y)$ and $L(y_0)$ is at most $\ell\cdot \poly(\log(n))$ as desired.

\subsubsection{Oblivious degree-2 decision trees}
We sketch the proof of \cref{thm:MainQuadratic}.
As mentioned earlier, our hard function $\Disp\colon\zo^n \to \zo$ will be a \emph{directional affine disperser}: for every non-zero $a\in\zo^n$ and every affine subspace $S$ of dimension at least $\Omega(\log(n))$, $\Disp(x+a)-\Disp(x)$ is not constant on $S$. 

Let $T$ be an oblivious degree-2 decision tree of depth $d = 0.99n$ computing $\Disp$. We can write $T(x) = h(Q(x))$, where $Q\colon \zo^n \to \zo^d$ is a degree-$2$ map and $h: \zo^d \to \zo$ is a function. 
We will find an affine subspace $S$ of dimension at least $n-d \ge \Omega(\log(n))$ and a direction $a\in \zo^n$ such that $Q(x) = Q(x + a)$  for all $x\in S$.
It will then follow that 
\[
T(x) = h(Q(x)) = h(Q(x+a)) = T(x+a) \qquad \text{for all } x\in S\;,
\]
contradicting the disperser property.

\paragraph{Constructing $S$ and $a$.}
It remains to construct such an affine subspace $S$ and direction $a$.
Since $d < n$, there exist distinct $u, v\in \zo^n$ with $Q(u) = Q(v)$.
We let $a = u + v$ and consider $L_a(x) = Q(x) + Q(x + a)$.
As $Q$ is a degree-$2$ map, $L_a$ must be a degree-$1$ map.
By our choice of $u$ and $v$, we see that $L_a(u) = Q(u) + Q(v) = 0^d$ and hence $0^d$ lies in the image of the affine map $L_a$.
Therefore, $S=L_a^{-1}(0^d)$ is a non-empty affine subspace of dimension at least $n-d$.  
Then, for all $x\in S$, we have $L_a(x) = Q(x) + Q(x + a) = 0^d$ and hence $Q(x) = Q(x + a)$ for all $x\in S$, as desired.

\subsubsection{Constructing explicit sumset condenser}\label{subsec: proof overview condenser}

We here sketch the construction of our explicit sumset condenser $\Cond$. Throughout this section, we assume the reader is familiar with notions of statistical distance, min-entropy, smooth min-entropy, sumset source, seeded extractor, and data processing inequality. See \cref{subsec: stat distance}, \cref{subsec: min-entropy}, \cref{subsec: sumset sources}, and \cref{subsec: seeded extractor} for the relevant definitions.

We require a condenser $\Cond: \zo^n \to \zo^m$ for $(n, k_1, k_2)$ sumset sources where $k_1 = \frac{n}{\log(\log(n))}, k_2 = \poly(\log(n)), m = 0.99 k_1$ such that for any such source $\X$, $\minH^{\eps}(\Cond(\X)) \ge m - o(m)$ for $\eps = 0.01$.

To construct our explicit sumset condenser $\Cond$ with these desired properties, we rely on previous constructions of excellent extractors $\Ext: \zo^n \to \zo^t$ for $(n, k, k)$ sumset sources for any $k\ge \poly(\log(n))$. This extractor will have the property that $(\X_1, \Ext(\X)) \approx_{n^{-0.01}} (\X_1, \U_t)$ where $t = k^{0.01}$ \cite{CL22}. 

Ideally, given such $\Ext$, we would have liked to output $\sExt(\X_1, \Ext(\X))$ where $\sExt$ is a seeded extractor which can output $0.99 k_1$ bits; this is because by data processing inequality, $\sExt(\X_1, \Ext(\X))  \approx_{\eps}\sExt(\X_1, \U_t) \approx_{\eps} \U_m$, concluding the construction. Such techniques have been deployed previously in constructions of two source extractors \cite{L16}. However, since we only have access to $\X = \X_1 + \X_2$, we are unable to do so. Therefore, we exploit the fact that $\X_2$ has much smaller entropy and construct a condenser.

To construct our condenser $\Cond$, we use standard construction of a linear seeded extractor $\LExt: \zo^n \times \zo^t\to \zo^m$ for min-entropy $0.999 k_1$ distributions where $m = 0.99 k_1$. This has the property that $\LExt(\cdot, y)$ is a linear function of the input for all $y\in \zo^t$.\footnote{Linear seeded extractors have proven to be incredibly helpful in constructions of seedless extractors for affine sources, and sumset sources.}
With this, our construction of $\Cond$ will be simply $\Cond(\X) = \LExt(\X, \Ext(\X))$.

Using linearity of the seeded extractor, we can write $\LExt(\X, \Ext(\X)) = \LExt(\X_1, \Ext(\X)) + \LExt(\X_2, \Ext(\X))$.
As argued earlier, we will have that $\LExt(\X_1, \Ext(\X))$ is statistically close to $\U_m$.
To finish our analysis, we observe that since $k_2, t \ll m$, the size of support of $\LExt(\X_2, \Ext(\X))$ is very small.
Using chain rule for smooth min-entropy, we will be able to show that for most fixings of $\LExt(\X_2, \Ext(\X))$, the distribution $\LExt(\X_1, \Ext(\X))$ will have high smooth min-entropy. For every such ``good'' fixing $z_2$ of $\LExt(\X_2, \Ext(\X))$, our output distribution will be $z_2 + \LExt(\X_1, \Ext(\X)) \,|\, (\LExt(\X_2, \Ext(\X)) = z_2)$. As $z_2$ is a fixed string and $\LExt(\X_1, \Ext(\X))\, |\, (\LExt(\X_2, \Ext(\X)) = z_2)$ has high smooth min-entropy, we conclude that our overall output distribution will have high smooth min-entropy, as desired.

As mentioned earlier, we are not aware of an explicit reference for the chain rule for smooth min-entropy; hence we provide a self-contained proof of this. See \cref{lem: chain rule smooth min entropy} for further details.

\paragraph{Organization}
We explicitly map out the connections to various circuit models in \cref{subsec: Connections to other circuit models}.
We include the necessary preliminaries in \cref{sec:prelims}.
In \cref{sec: seperator based lower bound} we prove our lower bound for single-output oblivious local decision trees. In \cref{sec: noise sensitivity based lower bound} we prove the lower bound for multi-output oblivious local decision trees, and in \cref{sec:condenser} we construct the required sumset condenser. Finally, in \cref{sec: oblivious deg 2 decision tree} we show the lower bound for oblivious degree-2 decision trees .
We lay out many open directions in \cref{sec: conclusion and open problems}.

\section{Connections to other circuit complexity models}
\label{subsec: Connections to other circuit models}

\subsection{Depth-3 circuits.}
The closest connection to local decision trees is with $\Sigma_3^{\ell}$ circuits---these are $\OR\circ\AND\circ\OR_{\ell}$ circuits where $\ell$ is the bottom fan-in. Equivalently, these circuits compute an $\OR$ of $\ell$-CNFs. We can easily see that if a function $f$ can be computed by an $\ell$-local decision tree of depth $d$, then it can also be computed by a $\Sigma_3^{\ell}$ circuit of size $2^{d + \ell}$. This means that lower bounds against these depth-$3$ circuits yield lower bounds against $\ell$-local decision trees. For a lot of other circuit models, the task of proving non-trivial lower bounds can be reduced to proving near maximal lower bounds against depth-$3$ circuits. One of our main conceptual contributions is that all known reductions from other circuit models to depth-3 circuits can instead be made to $\ell$-local decision trees where a size $\ge s$ lower bound requirement against depth-3 circuit often translates to that of proving depth $\ge \log(s)$ lower bound against $\ell$-local decision trees. We will go over various computational models below specifying these exact connections.

We further believe that proving lower bounds against $\ell$-local decision trees can be a much simpler task than proving lower bounds against $\Sigma_3^{\ell}$ circuits. 
For instance, for $\ell$-local decision trees, it is trivial to see that computing parity requires depth $n / \ell$ and that this is tight. However, proving this for $\Sigma_3^{\ell}$ circuits was an open problem and only after extensive efforts, was the tight bound of $2^{n / \ell}$ obtained~\cite{PPZ99}. Another example illustrating this is that Chistopolskaya and Podolskii showed that a $2$-local decision tree computing Majority must have depth at least $n - o(n)$~\cite{CP22}. This is in contrast to the case of $\Sigma_3^2$ circuits where the size of optimal circuit computing Majority is $2^{n/2}\cdot \poly(n)$ \cite{HJP95}, yielding a trivial lower bound of $n / 2 - o(n)$ on the depth of any $2$-local decision tree computing Majority.

\subsection{Valiant's model.}
\label{subsec: connection to valiant}
Valiant showed that to prove super-linear circuit lower bounds against log-depth circuits as well as against series-parallel circuits (these are still unconquered frontiers in circuit complexity), it suffices to prove strong lower bounds against a new model of computation that he introduced---Valiant's model~\cite{V77}.
He further showed that to prove lower bounds against his model, it suffices to prove strong lower bounds against $\Sigma_3^{\ell}$ circuits.

We observe that $\ell$-local decision trees sandwich perfectly in between these depth-3 circuits and Valiant's model---that such depth-3 circuit lower bounds yield lower bounds against $\ell$-local decision trees; and that those exact lower bounds against $\ell$-local decision trees yield the same lower bound against Valiant's model, yielding super-linear circuit lower bounds.
We define Valiant's model and provide exact parameter translation in \cref{subsec: Valiant}.

From these connections, we obtain the following consequences from lower bounds on local decision trees.
First, finding an explicit function that requires depth $\Omega(n)$ against $\ell$-local decision trees for any $\ell \in \omega(1)$ yields super-linear lower bounds against series-parallel circuits. And finding an explicit function that requires depth $\omega(\frac{n}{\log(\log(n))})$ against $\ell = n^{\eps}$-local decision trees, for any constant $\eps > 0$, yields super-linear lower bounds against log-depth circuits. 
In fact, for these lower bounds, it suffices to obtain lower bound against $\ell$-local decision trees outputting many---say, $m$ bits, provided that $m \le o(d)$. 

In fact, above, instead of proving lower bounds against $\ell$-local decision trees, it suffices to prove all these lower bounds against semi-oblivious local decision trees: $\ell$-local decision trees where all nodes at the same level are (potentially different) functions over the same set of $\ell$ variables (see \cref{sec:decTrees} for formal treatment).

\subsection{Unrestricted-depth circuit lower bounds.}
The work~\cite{GKW21} showed that to improve the state of the art lower bounds for unrestricted-depth circuits (circuits with arbitrary fan-in 2 gates and no depth restriction) from current state of the art of $3.1n - o(n)$ to $3.9n-o(n)$, it suffices to obtain a lower bound of size $2^{n - o(n)}$ on $\Sigma_3^{16}$ circuits. We observe from their proof that it suffices to obtain a size lower bound of $2^{n - o(n)}$ against $16$-local decision trees, a simpler task. Also from earlier discussion, we have that $2^{n - o(n)}$ lower bounds on $\Sigma_3^{16}$ circuits yield size lower bound of $2^{n - o(n)}$ against $16$-local decision trees, demonstrating that the latter is an easier task.

\subsection{Dispersers for quadratic varieties.}
The work~\cite{GK16} showed that to improve these unrestricted-depth circuit lower bounds from $3.1n - o(n)$ to $3.11n$, it suffices to explicitly construct the following pseudorandom function---a disperser for degree-$2$ varieties of min-entropy $o(n)$ (see \cref{def: disperser deg d variety sources}). 
We observe that the task of proving lower bounds against degree-2 decision trees is an easier task than that of constructing a degree-$2$ variety disperser (see \cref{subsec: variety sources} for further details). 
In~particular, any degree-2 decision tree computing such a disperser with min-entropy requirement $o(n)$ would require depth $n - o(n)$. 
Since constructing such dispersers seems to be beyond the reach of current techniques, we hope that the tools developed to obtain lower bounds against degree-2 decision trees can help inspire improved constructions of such dispersers.

\subsection{Oblivious Branching programs.}

An oblivious branching program is a layered branching programs where all nodes in the same layer read the same variables. It has been observed by previous works that if a function can be computed by a width-$w$ length-$t$ oblivious branching program, then it can be computed by an oblivious $\ell$-local decision tree of depth $w\log(w)\cdot t / \ell$~\cite{HR15, LRT22}. Hence, lower bounds against $\ell$-local decision trees result in lower bounds against such branching programs.

\section{Preliminaries}\label{sec:prelims}
For a positive integer~$n$, by $[n]$ we denote the set $\{1,\ldots,n\}$.  
All logarithms are base~$2$, i.e., $\log(2^n)=n$. Let $f\colon\zo^n\to\zo^m$ be a function. For a set $X\subseteq\zo^n$, $f(X)$ denotes the image of $X$ under $f$, and for $y\in\zo^m$, $f^{-1}(y)$ denotes the preimage of $y$ under~$f$. We will use boldface font to denote random variables. We will sometimes abuse notation and interchangeably use the terms random variables, distributions, and sources. For a distribution $\X$, we write $\supp(\X)$ to denote its support. We will use $\U_n$ to denote the uniform distribution over $\zo^n$.

For an integer $\ell\geq 2$, an $\ell$-uniform hypergraph is a hypergraph where each edge contains exactly $\ell$ distinct vertices. For a subset of vertices $S$ of a hypergraph $H$ over vertex set $V$, the neighborhood $\Nbr(S)$ is the set of all vertices adjacent to a vertex from $S$, $\Nbr(S)=\{v\in V\colon \exists s\in S, \exists e \in H, \text{ s.t. }v,s\in e\}$.

\subsection{\texorpdfstring{$\cF$-decision trees}{F-decision trees}}\label{sec:decTrees}

We first define the general notion of $\cF$-decision trees and will specialize it to local and low-degree decision trees.
\begin{definition}[$\cF$-decision tree]
Let $n, m\in \N$ and let $\cF$ be a class of Boolean functions over $n$ variables.
An \emph{$\cF$ decision tree} $T: \zo^n \to \zo^m$ is a binary tree where each internal node is marked by a Boolean  function $f\in \cF$ over the $n$ input variables with two outgoing edges marked with $0$ or $1$ (the outcome of the function), and each leaf node marked by an outcome from $\zo^m$.

The depth of an $\cF$-decision tree is the depth of the underlying binary tree.
\end{definition}

We will study two specific instantiations of these trees: (i) \emph{$\ell$-local decision trees} where $\cF$ is the family of all $\ell$-local Boolean functions; and (ii) \emph{degree-$d$ decision trees} where $\cF$ is the family of multivariate degree-$d$ polynomials over $\F_2$.

We specialize this definition further to define the oblivious version of these trees.
\begin{definition}[Oblivious $\cF$-decision tree]
Let $\cF$ be a class of Boolean functions and let $T$ be an $\cF$-decision tree.
We say $T$ is an \emph{oblivious $\cF$-decision tree} if all nodes at the same level of the tree are marked by the exact same function from $\cF$.
\end{definition}

We can characterize oblivious $\cF$-decision trees using a composition-based definition as well. In~particular, for any oblivious $\cF$-decision tree $T: \zo^n \to \zo^m$ of depth $d$, we can associate an $\cF$-map $P: \zo^n \to \zo^d$ (where each output function $P_i\in \cF$) and a function $h: \zo^d \to \zo^m$ (of arbitrary complexity) such that $T(x) = h(P(x))$.

Here, at each node at level $i$, the query of the decision tree is the function $P_i$, and the label of each leaf is dictated by the function $h$.

We will also consider semi-oblivious $\ell$-local decision trees defined as follows.
\begin{definition}[Semi-Oblivious local decision tree]
Let $T$ be an $\ell$-local decision tree.
We say that $T$ is a \emph{semi-oblivious $\ell$-local decision tree} if all nodes at level $i$ are marked by (possibly different) $\ell$-local functions of the same $\ell$ variables.
\end{definition}

\subsection{Valiant's model}\label{subsec: Valiant}

We define Valiant's model as follows:
\begin{definition}[Valiant's model~\cite{V77}]
For $\ell, m, n, t\in \N$, we say a function $f: \zo^n \to \zo^m$ is computable in Valiant's model with $t$ intermediate bits and locality $\ell$ if the following holds.
There exist $t$ intermediate functions $h_1, \dots, h_t$ where each $h_i\in\zo$ is a function of the previous bits $h_1, \dots, h_{i-1}$ and $\ell$ input bits.
Furthermore, for all $i\in [m]$, the output bit $f_i$ is a function of the intermediate bits $h_1, \dots, h_t$ and $\ell$ input bits.
\end{definition}

Valiant~\cite{V77} showed that lower bounds in this model could be used to prove breakthrough circuit lower bounds. In particular, a lower bound of $t\geq\Omega(n)$ for any $\ell \in \omega(1)$ would imply a super-linear lower bound on the size of series-parallel circuits. Similarly, a lower bound of $t\geq\omega\left(\frac{n}{\log(\log(n))}\right)$ for $\ell=n^\eps$ would imply a super-linear lower bound on the size of log-depth circuits.

Below we show that if $f$ is computable in Valiant's model, then it is also computable by semi-oblivious local decision trees.
\begin{claim}\label{claim: valiant to local dt}
Let $\ell, m, n, t\in \N$, and let $f: \zo^n \to \zo^m$ be computable by Valiant's model with~$t$ intermediate bits with locality $\ell$.
Then, there exists an $\ell$-local semi-oblivious tree of depth $t + m$ computing $f$.
\end{claim}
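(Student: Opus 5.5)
The plan is to simulate the Valiant computation level by level: one level of the tree for each intermediate bit, then one level for each output bit. Let the intermediate functions be $h_1,\dots,h_t$. Each $h_i$ depends on $h_1,\dots,h_{i-1}$ and on input bits indexed by a set $S_i\subseteq[n]$ with $\abs{S_i}\le \ell$. Likewise each output $f_j$ depends on $h_1,\dots,h_t$ and on input bits in a set $S'_j$ with $\abs{S'_j}\le\ell$. The sets $S_i$ and $S'_j$ are fixed in advance by the model; they do not depend on the input. This is exactly what makes the tree semi-oblivious. We may pad each set to have size exactly $\ell$ when $n\ge \ell$.

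\textbf{Intermediate levels.} Levels $1,\dots,t$ compute the intermediate bits.
\begin{itemize}
\item Consider a node $v$ at level $i\le t$. The root-to-$v$ path has fixed answers $b_1,\dots,b_{i-1}\in\zo$, and we maintain the invariant that $b_k=h_k(x)$ for every input $x$ reaching $v$.
\item At $v$ we query the function $x\mapsto h_i(b_1,\dots,b_{i-1},x_{S_i})$. With the $b_k$ hard-wired this is an $\ell$-local function of the variables $S_i$.
\item Different nodes on level $i$ may query different functions, since the hard-wired $b$'s differ, but all of them depend on the same variable set $S_i$. This is allowed by the definition of a semi-oblivious tree.
\item By induction, the answer at $v$ equals $h_i(x)$, so the invariant propagates to level $i+1$.
\end{itemize}

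\textbf{Output levels.} Levels $t+1,\dots,t+m$ compute the output bits. At level $t+j$, a node has path answers $b_1,\dots,b_t$ equal to $h_1(x),\dots,h_t(x)$, together with earlier output answers $c_1,\dots,c_{j-1}$. It queries $x\mapsto f_j(b_1,\dots,b_t,x_{S'_j})$, which is again $\ell$-local on the fixed set $S'_j$. Each leaf is labeled with the string $(c_1,\dots,c_m)\in\zo^m$ of answers recorded along its path. By the invariant, for every $x$ this label equals $f(x)$, and the tree has depth $t+m$.

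\textbf{Main care point.} The one point needing care is checking that the variable sets per level are uniform across nodes. This holds because Valiant's model fixes which $\ell$ input bits each $h_i$ and $f_j$ reads, independently of the values of the previous bits. Everything else is routine induction on the level.
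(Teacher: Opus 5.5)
Your proposal is correct and follows the paper's proof: levels $1,\dots,t$ compute the intermediate bits $h_i$ with the earlier answers hard-wired, levels $t+1,\dots,t+m$ compute the output bits, and each leaf is labeled by the answers recorded on the output levels. Your explicit remark that each $h_i$ and each $f_j$ reads a fixed set of $\ell$ input variables, so every level queries the same variables, supplies the semi-obliviousness justification that the paper leaves implicit.
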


\begin{proof}
Let the intermediate bits be $h_1, \dots, h_t$.
For $i\in [t]$, we let the nodes at level $i$ compute the output of the intermediate bit $h_i$. This is possible since for a node at level $i$, the path from root to that node fixes the values of all previous intermediate bits, leaving $h_i$ to be an $\ell$-local function of the input.
Then for $i\in [m]$, we let the nodes at level $t + i$ compute the output of bit $f_i$.
Again, this is possible since after fixing the intermediate bits, $f_i$ is an $\ell$-local function of the input.
Finally, we label each leaf with the path taken from level $t$ to the leaf.
\end{proof}

This implies that for a function $f: \zo^n \to \zo^m$, if any semi-oblivious $\ell$-local decision tree computing $f$ has depth $d$, then it requires at least $d - m$ intermediate bits to be computed in Valiant's model.

In particular, for any $\ell=\omega(1)$, a lower bound of $d\geq\Omega(n)$ on the depth of semi-oblivious $\ell$-local decision tree would lead to a super-linear lower bound for series-parallel circuits.

\subsection{Statistical distance}\label{subsec: stat distance}

\begin{definition}[Statistical distance]
Let $\X, \Y\sim \Omega$. The \emph{statistical distance} between $\X$ and $\Y$ is defined as 
\[
    \abs{\X - \Y} = \max_{S\subset \Omega} \left(\Pr[\X\in S] - \Pr[\Y\in S]\right) = \frac{1}{2}\sum_{s\in \Omega} \abs{\Pr[\X = s] - \Pr[\Y = s]}\;.
\]
\end{definition}
If $\abs{\X - \Y} \le \eps$, then we say that $\X$ is $\eps$-close to $\Y$, and denote this by $\X \approx_{\eps} \Y$. We will use the fact that statistical distance satisfies the triangle inequality.
We will also use the following standard result regarding statistical distance.
\begin{lemma}[Data processing inequality]\label{lem: data processing inequality}
For any random variables $\X, \Y\sim \Omega_1$ and any function $f: \Omega_1 \to \Omega_2$, 
\[
    \abs{\X - \Y} \ge \abs{f(\X) - f(\Y)} \;.
\]
\end{lemma}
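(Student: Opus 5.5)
The plan is to work directly from the second characterization of statistical distance, $\abs{\X - \Y} = \max_{S\subset\Omega}(\Pr[\X\in S]-\Pr[\Y\in S])$, and pull every test set on $\Omega_2$ back to a test set on $\Omega_1$. Because $f$ is a deterministic function, there is no averaging over internal randomness, and the inequality reduces to a containment of families of test events.

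\textbf{Step 1 (pull back a witness).} Take any $S_2\subseteq\Omega_2$, in particular one attaining the maximum in the definition of $\abs{f(\X)-f(\Y)}$. Set $S_1 = f^{-1}(S_2)\subseteq\Omega_1$.

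\textbf{Step 2 (compare events).} By definition of the preimage, for every $x\in\Omega_1$ we have $f(x)\in S_2$ if and only if $x\in S_1$. Hence
\[
\Pr[f(\X)\in S_2]-\Pr[f(\Y)\in S_2] \;=\; \Pr[\X\in S_1]-\Pr[\Y\in S_1] \;\le\; \abs{\X-\Y}\,,
\]
where the inequality holds because $S_1$ is one of the sets over which the maximum defining $\abs{\X-\Y}$ is taken.

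\textbf{Step 3 (conclude).} Taking the maximum over $S_2$ on the left-hand side gives $\abs{f(\X)-f(\Y)}\le\abs{\X-\Y}$.

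\textbf{Alternative route.} One could instead use the $\ell_1$ formula: write $\Pr[f(\X)=t]=\sum_{s\in f^{-1}(t)}\Pr[\X=s]$, and similarly for $\Y$. Then apply the triangle inequality to each fiber $f^{-1}(t)$ and sum over $t\in\Omega_2$; the fibers partition $\Omega_1$, so the sum is at most $\sum_{s}\abs{\Pr[\X=s]-\Pr[\Y=s]}$.

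There is no real obstacle here. The only point needing care is that the maximum in the definition ranges over all subsets, including preimages, and that finiteness or countability of $\Omega_1,\Omega_2$ makes every such set measurable.
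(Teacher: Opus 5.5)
Your proof is correct: pulling back a distinguishing set $S_2\subseteq\Omega_2$ to $f^{-1}(S_2)\subseteq\Omega_1$ is the standard argument, and the alternative via the $\ell_1$ formula (triangle inequality on each fiber, with the fibers partitioning $\Omega_1$) works equally well. The paper gives no proof of this lemma and simply invokes it as a standard fact, so there is no argument of its own to compare yours against.
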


\subsection{Min-entropy}\label{subsec: min-entropy}

We will use the standard notion of min-entropy to measure the amount of randomness in a random variable.
\begin{definition}\label{def: min-entropy}
For a random variable $\X\sim \Omega$, we define its \emph{min-entropy} as 
\[
\minH(\X) = \min_{x\in \Omega} \log(1 / \Pr[\X = x])\;.
\]
\end{definition}
We say $\X$ is an \emph{$(n, k)$ source} if $\X\sim \zo^n$ and $\minH(\X) = k$.
We will also use the following notion of smooth min-entropy. We say that $\X$ is a \emph{flat $(n,k)$ source} if $\X$ is a uniform distribution over a set $S\subseteq\zo^n$ of size $|S|=2^k$ (provided that $2^k\in\N$).
\begin{definition}
For a random variable $\X\sim \Omega$ and $\eps > 0$, we define its \emph{smooth min-entropy} with error $\eps$ as 
\[
\minH^{\eps}(\X) = \max_{\Y\sim \Omega: \abs{\X - \Y}\le \eps}\{\minH(\Y)\} \;.
\]
\end{definition}

We will use the chain rule for min-entropy.
\begin{lemma}[Chain rule for min-entropy~\cite{MW97}]\label{lem: chain rule min entropy}
For any random variables $\X\sim \Omega_X$ and $\Y\sim \Omega_Y$ and $\eps > 0$, we have
\[
    \Pr_{y\sim \Y}[\minH(\X | \Y = y) \ge \minH(\X) - \log(\abs{\supp(\Y)}) - \log(1/\eps)] \ge 1 - \eps \;.
\]
\end{lemma}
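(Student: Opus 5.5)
We prove the chain rule by a direct counting argument on the ``light'' fibers of $\Y$. Write $k=\minH(\X)$ and $s=\abs{\supp(\Y)}$, and call $y\in\supp(\Y)$ \emph{bad} if $\Pr[\Y=y]<\eps/s$. There are at most $s$ bad values, each of probability less than $\eps/s$. By a union bound, $\Pr_{y\sim\Y}[y \text{ is bad}]<\eps$. It therefore suffices to show that every good $y$ satisfies the min-entropy bound.

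Fix a good $y$, so that $\Pr[\Y=y]\ge \eps/s$. For any $x\in\Omega_X$,
\[
\Pr[\X=x\mid \Y=y]=\frac{\Pr[\X=x,\Y=y]}{\Pr[\Y=y]}\le \frac{\Pr[\X=x]}{\eps/s}\le 2^{-k}\cdot\frac{s}{\eps}\;.
\]
Here we used $\Pr[\X=x,\Y=y]\le\Pr[\X=x]\le 2^{-k}$. Taking $\log(1/\cdot)$ and minimizing over $x$ gives
\[
\minH(\X\mid\Y=y)\ge k-\log(s)-\log(1/\eps)\;,
\]
which holds for every good $y$.

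Combining the two steps, the event in the lemma contains the event that $y$ is good, which has probability at least $1-\eps$. This completes the argument.

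There is no real obstacle. The only care needed is to define bad $y$ relative to $\supp(\Y)$, so that the union bound ranges over exactly $s$ terms. If the paper's convention allows $\eps\ge 1$, the statement is trivial in that case.
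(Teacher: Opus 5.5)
Your argument is correct and complete. Discarding the light values $y$ with $\Pr[\Y=y]<\eps/\abs{\supp(\Y)}$ costs less than $\eps$ by a union bound over $\supp(\Y)$. For every remaining $y$, the bound $\Pr[\X=x\mid\Y=y]\le\Pr[\X=x]/\Pr[\Y=y]\le 2^{-\minH(\X)}\abs{\supp(\Y)}/\eps$ gives exactly the claimed min-entropy.

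The paper does not prove this lemma; it cites it from \cite{MW97}. Your light-fiber argument is the standard proof. An equivalent route bounds $\E_{y\sim\Y}[2^{-\minH(\X\mid\Y=y)}]\le\abs{\supp(\Y)}\,2^{-\minH(\X)}$ and then applies Markov's inequality, which gives the same statement.
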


We will rely on the chain rule for smooth min-entropy. Since we are not aware of an explicit reference for the following lemma, we supply its proof using standard ideas.
\begin{lemma}[Chain rule for smooth min-entropy]\label{lem: chain rule smooth min entropy}
For any random variables $\X\sim \Omega_X$ and $\Y\sim \Omega_Y$ and $\eps, \gamma, \delta > 0$, we have
\[
\Pr_{y\sim \Y}\left[\minH^{\gamma}(\X | \Y = y) \ge \minH^{\eps}(\X) - \log(\abs{\supp(\Y)}) - \log(1/\delta)\right] \ge 1 - (\eps + \delta + (2\eps / \gamma)) \;.
\]
\end{lemma}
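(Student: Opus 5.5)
The plan is to reduce the smooth statement to the ordinary chain rule (\cref{lem: chain rule min entropy}) by coupling. Let $k=\minH^{\eps}(\X)$ and fix $\X'$ with $\abs{\X-\X'}\le\eps$ and $\minH(\X')\ge k$. The first step is to build a joint distribution $(\X',\Y')$ that is close to $(\X,\Y)$ and whose $\Y'$-marginal is $\Y$. Take a maximal coupling of $\X$ and $\X'$, so that $\Pr[\X\ne\X']\le\eps$. Extend it to a joint distribution of $(\X,\Y,\X')$ by sampling $\X'$ conditioned only on $\X$, which makes $\Y$ independent of $\X'$ given $\X$. Then set $\Y'=\Y$.

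The second step applies \cref{lem: chain rule min entropy} to the pair $(\X',\Y)$. Since $\supp(\Y)$ is unchanged, with probability at least $1-\delta$ over $y\sim\Y$ we get
\[
\minH(\X'\mid \Y=y)\ \ge\ k-\log\abs{\supp(\Y)}-\log(1/\delta).
\]

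The third step controls the conditional distance between $\X\mid\Y=y$ and $\X'\mid\Y=y$. Let
\[
p_y=\Pr[\X\ne\X'\mid \Y=y].
\]
Averaging over $y\sim\Y$ gives $\E_y[p_y]=\Pr[\X\ne\X']\le\eps$. For each $y$, the coupling conditioned on $\Y=y$ is a coupling of the two conditional distributions, so
\[
\abs{(\X\mid\Y=y)-(\X'\mid\Y=y)}\le p_y.
\]
By Markov's inequality, $\Pr_y[p_y>\gamma]\le \eps/\gamma$. For every $y$ with $p_y\le\gamma$ and the entropy bound from the second step, the distribution $\X'\mid\Y=y$ witnesses
\[
\minH^{\gamma}(\X\mid\Y=y)\ge k-\log\abs{\supp(\Y)}-\log(1/\delta).
\]
A union bound gives failure probability at most $\delta+\eps/\gamma$. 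This is already stronger than the claimed $\eps+\delta+2\eps/\gamma$; the extra $\eps$ and the factor $2$ leave slack in case the construction is done less carefully.

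The main obstacle is making the conditional coupling legitimate. The conditional distribution $\X'\mid\Y=y$ must be well defined, and it is, because $\Y'=\Y$ puts positive mass on every $y$ in $\supp(\Y)$. We also need the chain rule to apply to an arbitrary joint distribution $(\X',\Y)$, not only to the original pair; the statement of \cref{lem: chain rule min entropy} allows any random variables, so this holds.

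An alternative, if one prefers not to build a joint coupling, is the following. Define a modified pair $(\X',\Y')$ from $(\X,\Y)$ by moving mass where $\Pr[\X'=x]<\Pr[\X=x]$. A direct count bounds the conditional statistical distance by $2\Pr[\text{moved}\mid y]$, which is where the constants $2\eps/\gamma$ and the additive $\eps$ (for $y$ whose mass changed) would come from. I would pursue the coupling argument first.
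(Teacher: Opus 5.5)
Your proof is correct, and it runs the coupling in the opposite direction from the paper.

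The paper keeps $\X'$ and builds a new $\Y'$ by pushing $\X'$ through the channel $\Y\mid\X$. To keep $\supp(\Y')=\supp(\Y)$, it first arranges $\supp(\X)\subseteq\supp(\X')$ and puts a uniform filler on $\supp(\X')\setminus\supp(\X)$. It then applies the ordinary chain rule to $(\X',\Y')$. Since the $\Y'$-marginal is only $\eps$-close to $\Y$, two losses follow:
\begin{itemize}
\item moving the bad set from $\Y'$ back to $\Y$ costs an additive $\eps$;
\item bounding $\E_{y\sim\Y}\bigl|(\X\mid\Y=y)-(\X'\mid\Y'=y)\bigr|$ must pass through $\abs{\Y-\Y'}$, which gives $2\eps$.
\end{itemize}
This is exactly where $\eps+\delta+2\eps/\gamma$ comes from. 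Your fallback sketch of moving mass is essentially this route.

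You instead glue a maximal coupling of $(\X,\X')$ onto $(\X,\Y)$ along $\X$, so the $\Y$-marginal is untouched. The chain rule then applies directly over $y\sim\Y$. Each conditional distance is bounded by $p_y$, with $\E_y[p_y]=\Pr[\X\neq\X']\le\eps$. Markov's inequality and a union bound give the sharper failure probability $\delta+\eps/\gamma$.

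Your route also dispenses with the support-inclusion step and the filler distribution. The paper's route avoids couplings and writes $(\X',\Y')$ down explicitly, but it buys nothing beyond that. Your sharper bound can be substituted verbatim in the condenser analysis, where any failure probability of order $\sqrt{\eps_0}$ suffices.
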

\begin{proof}
Let $\X' \sim \Omega_X$ be such that $\minH(\X') = \minH^{\eps}(\X)$ and $\abs{\X - \X'} \le \eps$. Note that we can always pick such an $X'$ satisfying $\supp(\X)\subset \supp(\X')$.
We then define $\Y'\sim \Omega_Y$ such that for all $x\in \supp(\X)$ and $y\in \supp(\Y)$, $\Pr[\Y' = y | \X' = x] = \Pr[\Y = y | \X = x]$ 
Also for $x\in \supp(\X')\setminus \supp(\X)$, we let $(\Y' | \X' = x)$ be uniformly distributed over $\supp(\Y)$. This finishes our description of the distributions $\X'$ and $\Y'$, satisfying $\supp(\X)\subset \supp(\X')$ and $\supp(\Y') = \supp(\Y)$.
We also  observe that by definition of these random variables, $\abs{(\X, \Y) - (\X', \Y')} \le \abs{\X-\X'} \leq \eps$. From the data processing inequality (\cref{lem: data processing inequality}), we obtain that $\abs{\Y - \Y'}\le \eps$.

Let $\Bad = \{y\in \supp(\Y'): \minH(\X' | \Y' = y) \le \minH(\X') - \log(|\supp(\Y')|) - \log(1 / \delta)\}$.
Applying the chain rule for min-entropy (\cref{lem: chain rule min entropy}) to $\X'$ and $\Y'$, we infer that $\Pr[\Y'\in \Bad] \le \delta$.
Hence, we conclude that $\Pr_{y\in \Y}[y\in \Bad]\le \eps + \delta$.

Next, we define $\Far = \{y\in \supp(\Y): \left|\,(\X | \Y = y) - (\X' | \Y' = y)\,\right| \ge \gamma\}$.
We claim that $\E_{y\in \Y} \left[\abs{\,(\X | \Y = y) - (\X' | \Y' = y)\,}\right] \le 2\eps$.
Indeed, we compute that
\begin{align*}
 \E_{y\sim \Y} &\left[\abs{\,(\X | \Y = y) - (\X' | \Y' = y)\,}\right]\\
& = \sum_{x\in \Omega_X, y\in \Omega_Y} \Pr[\Y = y] \cdot \frac12 \abs{\Pr[\X = x | \Y = y] - \Pr[\X' = x | \Y' = y]}\\
& \le \abs{\Y - \Y'} + \frac12\sum_{x\in \Omega_X, y\in \Omega_Y} \abs{\Pr[\Y = y]\Pr[\X = x | \Y = y] - \Pr[\Y' = y]\Pr[\X' = x | \Y' = y]}\\
& \le \eps + \abs{\,(\X,\Y)-(\X',\Y')\,} \\
&= 2\eps \;.
\end{align*}
Hence, the claim holds.
We now apply Markov's inequality to infer that 
$\Pr_{y\sim \Y}[y\in \Far]\le \frac{2\eps}{\gamma}$.

Therefore, $\Pr_{y\sim\Y}[y\in \Bad\cup\Far]\le \eps + \delta + \frac{2\eps}{\gamma}$.
As $\supp(\Y') = \supp(\Y)$ and $\minH(\X') = \minH^{\eps}(\X)$, we infer that for all $y\in \supp(\Y)\setminus (\Bad\cup \Far)$, it holds that $(\X | \Y = y)$ has statistical distance at most $\gamma$ from a distribution with min-entropy at least $\minH^{\eps}(\X) - \log(\abs{\supp(\Y)}) - \log(1/\delta)$, as desired.
\end{proof}

\subsection{Variety sources}\label{subsec: variety sources}

Below we define degree-$d$ variety sources over $\F_2$.
\begin{definition}[Degree-$d$ variety sources]
For $d, n\in \N$, a degree-$d$ variety source $\X\sim \zo^n$ is associated with a polynomial map $P = (p_1, \dots, p_t)\colon \F_2^n \to \F_2^t$ for some $t\in \N$, where each $p_i$ is a multilinear degree-$d$ polynomial of the input. The source $\X$ is uniform over the set of common zeroes of this map: $V = \{x\in \F_2^n: P(x) = 0^t\}$.
\end{definition}

We define dispersers for these sources as follows.
\begin{definition}[Dispersers for degree-$d$ variety sources]\label{def: disperser deg d variety sources}
For $d, k, n\in \N$, a function $\Disp: \zo^n \to \zo$ is a disperser for degree-$d$ variety sources with min-entropy $k$, if for all degree-$d$ variety sources $\X$ of min-entropy at least $k$, it holds that $\Disp(\X) = \zo$.
\end{definition}

In the work of \cite{GK16}, it was shown that dispersers for degree-$2$ variety sources for min-entropy $o(n)$ require unrestricted-depth circuits of size at least $3.11 n$, which would improve upon the current state of the art lower bound of $3.1n - o(n)$ for an explicit function~\cite{LT22}. However, the current best dispersers for degree-$2$ variety sources require min-entropy at least $(1 - \gamma)n$ for a small absolute constant $\gamma > 0$ \cite{LZ19}.

We next show that dispersers for degree-$t$ variety sources require large depth to be computed by a degree-$t$ decision tree, thereby establishing that proving lower bounds for degree-$t$ decision trees is an easier task than constructing said dispersers.
\begin{claim}
Let $\Disp: \zo^n \to \zo$ be a disperser for degree-$t$ variety sources with min-entropy requirement $k$.
Then, any degree-$t$ decision tree computing $\Disp$ requires depth at least $n - k + 1$.
\end{claim}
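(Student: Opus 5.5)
We argue by contradiction: take a degree-$t$ decision tree $T$ computing $\Disp$ of depth $d \le n-k$, and exhibit a degree-$t$ variety source of min-entropy at least $k$ on which $\Disp$ is constant. The tree may be non-oblivious, so we cannot use a single map $Q$ as in the oblivious overview. Instead we work with root-to-leaf paths, which partition the cube.

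The key observation is that the set of inputs reaching a fixed leaf is itself a variety. Let $v$ be a leaf, and let the path to it query polynomials $p_1,\dots,p_{d'}$ with $d'\le d$, taking answers $b_1,\dots,b_{d'}\in\zo$. The inputs reaching $v$ are exactly
\[
V_v=\{x\in\F_2^n : p_i(x)+b_i=0 \text{ for all } i\in[d']\}.
\]
Each $p_i+b_i$ is a multilinear polynomial of degree at most $t$, so the uniform distribution on $V_v$ is a degree-$t$ variety source whenever $V_v$ is nonempty. If one insists on exact degree $t$, we can pad with the zero polynomial, or simply note that the definition's intent is degree at most $t$.

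Next we find a large leaf. The tree has at most $2^d$ leaves, and the sets $V_v$ partition $\zo^n$. By pigeonhole, some leaf $v$ has $\abs{V_v}\ge 2^{n-d}\ge 2^{k}$. The uniform distribution on $V_v$ then has min-entropy $\log\abs{V_v}\ge k$.

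Finally, $T$ outputs the label of $v$ on every $x\in V_v$. Since $T$ computes $\Disp$, the function $\Disp$ is constant on $V_v$. This contradicts the disperser property, which requires $\Disp(\X)=\zo$ for this source. Hence $d\ge n-k+1$.

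The only delicate point is the bookkeeping. We must check that "degree-$t$" in the variety definition allows degree at most $t$ and allows constants to be added, and that min-entropy at least $k$ is exactly the disperser's hypothesis. Otherwise the argument is a direct pigeonhole.
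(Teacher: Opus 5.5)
Your proof is correct and is essentially the paper's argument: pigeonhole over the at most $2^d$ leaves gives a leaf reached by at least $2^{n-d}\ge 2^k$ inputs, and that leaf's input set is the common zero set of the polynomials $p_v$ or $p_v+1$ along its path, on which $\Disp$ is constant. One small point: your aside about padding with the zero polynomial does not actually yield exact degree $t$, but the paper's proof also reads ``degree-$t$'' as ``degree at most $t$,'' so nothing is lost.
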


\begin{proof}
Suppose there exists a degree-$t$ decision tree $T$ of depth $d\le n-k$ computing $\Disp$.
By an averaging argument, there exists a leaf node $L$ such that at least $2^{n-d}$ inputs from $\F_2^n$ end up at that leaf.
For each node $v$ in the path from root to $L$, let $p_v$ be the degree $t$ polynomial associated with it.
Let $q_v$ be the polynomial that equals $p_v$ if the path from root to $L$ took the zero branch at node $v$, and let $q_v = p_v + 1$ otherwise.
We observe that the set of inputs that end up at $L$ is exactly described by the variety where each of the polynomials $q_v$ is set to $0$.
Hence, we have that $T$ is constant over a variety of size at least $2^{n - d} \ge 2^k$, contradicting the fact that it is a disperser for such sources.
\end{proof}

\subsection{Affine dispersers and directional affine dispersers}

We define affine dispersers as follows.
\begin{definition}[Affine Disperser]
For $n, k\in \N$, a function $\Disp: \zo^n \to \zo$ is an \emph{$(n, k)$ affine disperser} if for all affine subspaces $S$ of dimension at least $k$, it holds that $\Disp(S) = \{0, 1\}$.
\end{definition}

We will also require the following generalization of affine dispersers.
\begin{definition}[Directional Affine disperser]
\label{def: directional affine disperser}
For $n, k\in \N$, a function $\Disp: \zo^n \to \zo$ is an \emph{$(n, k)$ directional affine disperser} if for all directions $a\in \zo^n\setminus\{0^n\}$, it holds that $\Disp(x) + \Disp(x + a)$ is an $(n, k)$ affine disperser.
\end{definition}

We will utilize the following explicit construction of a directional affine disperser.
\begin{theorem}[{\cite[Theorem~1.4]{LZ24}}]\label{thm: explicit dir affine disperser}
There exists a constant $C_0>1$ and a polynomial-time computable $(n,k)$ directional affine disperser $\Disp\colon\{0,1\}^n\to\{0,1\}$ for any $k\geq C_0\log{n}$.\footnote{They obtain a stronger variant of these objects, namely extractors; but we only need dispersers for our applications.\label{footnote: extract not disperse}}
\end{theorem}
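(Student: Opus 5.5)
This statement is imported verbatim from \cite[Theorem~1.4]{LZ24}, so within this paper we will not reprove it. We will use it as a black box, and only the disperser property is needed: for every $a\neq 0^n$ and every affine subspace $S$ of dimension at least $C_0\log n$, the function $x\mapsto \Disp(x)+\Disp(x+a)$ is non-constant on $S$. If one wanted a self-contained argument, the plan would be to reconstruct the Li--Zhong approach, and we outline it here.

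The key reformulation is in terms of pairs of points. Fix a direction $a\neq 0^n$ and an affine subspace $S=v+W$ with $\dim W\ge k$, and let $\X$ be uniform on $S$. We must show that $\Disp(\X)+\Disp(\X+a)$ is non-constant. We split into two cases according to whether $a\in W$.
\begin{itemize}
\item If $a\notin W$, then $\X$ and $\X+a$ are uniform on two disjoint cosets of $W$, namely $v+W$ and $v+a+W$. Moreover, the pair $(\X,\X+a)$ is a uniform point of a subspace of dimension $k$ inside $\F_2^{2n}$.
\item If $a\in W$, we decompose $W=W'\oplus\langle a\rangle$. The pair $\{\X,\X+a\}$ is then a uniform point of $W'$ paired with its translate, so $\X$ is determined by a uniform element of a $(k-1)$-dimensional space together with one bit.
\end{itemize}

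The construction we would aim for has the form $\Disp(x)=g(\LExt(x,y_1),\dots)$. Here $\LExt$ is a linear seeded extractor, the seeds are enumerated, and the outputs are combined by a function $g$ of degree at least $3$. The degree requirement matters: for any quadratic $f$, the derivative $f(x)+f(x+a)$ is affine and hence constant on a subspace of codimension $1$, so quadratic constructions fail. The intended analysis is as follows. Linearity of $\LExt$ turns both $\X$ and $\X+a$ into affine sources whose seeded outputs are close to uniform for most seeds. The contribution of the shift $a$ is then a fixed string $\LExt(a,y)$. The goal is to show that for enough seeds, the derivative of $g$ evaluated on these nearly uniform blocks is still a non-constant function, for example by a Schwartz--Zippel-type argument on the derivative of $g$.

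The main obstacle is the case $a\in W$. There the distributions of $\X$ and $\X+a$ coincide, and all the randomness that distinguishes them lies in a single coordinate. We would first try to handle this case by conditioning on the $W'$-component, and then arguing that the derivative in direction $a$ is itself a non-degenerate affine-extractor-like function of the remaining $k-1$ dimensions. The cost is one dimension and the extractor error, which is why the threshold $k\ge C_0\log n$ suffices. Since all of this is carried out in \cite{LZ24}, we simply invoke their theorem.
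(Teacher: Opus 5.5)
Your proposal matches the paper: the paper also gives no proof and simply imports this result from \cite{LZ24} as a black box, using only the disperser property. One correction to your optional sketch: the case $a\in W$ is the easy one, not the main obstacle, because $\Disp(x)+\Disp(x+a)$ is invariant under $x\mapsto x+a$; hence its values on $v+W$ coincide with its values on $v+W'$, and the case reduces to $a\notin W'$ with $\dim W'=\dim W-1$ (by contrast, conditioning on the $W'$-component, as you propose, leaves only the $\langle a\rangle$-bit, on which the derivative is constant), so all the real difficulty lies in the case $a\notin W$, which is what \cite{LZ24} must handle.
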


\subsection{Sumsets and sumset sources}\label{subsec: sumset sources}

For $x,y\in\zo^n$, $x+y$ denotes the bitwise xor of $x$ and $y$. For sets $A, B\subseteq\zo^n$, we define the sumset $A+B=\{a+b\colon a\in A, b\in B\}$. Similarly, for $x\in\zo^n$ and $A\subseteq\zo^n$, $x+A$ denotes the set $x+A=\{x+a\colon a\in A\}$.

We define sumset sources as follows.
\begin{definition}\label{def: sumset source}
We say $\X\sim \zo^n$ is a \emph{$(n, k_1, k_2)$ sumset source} if there exist two independent sources $\X_1, \X_2\sim \zo^n$ with $\minH(\X_1) = k_1, \minH(\X_2) = k_2$ such that $\X = \X_1 + \X_2$.
\end{definition}
We now define sumset dispersers.

\begin{definition}[Sumset Disperser]
For $n, k\in \N$, a function $\Disp: \zo^n \to \zo$ is an \emph{$(n, k)$ sumset disperser} if for all $(n, k, k)$ sumset sources $\X$, it holds that $\supp(\Disp(\X)) = \{0, 1\}$.
\end{definition}

We will use the following explicit construction of a sumset disperser.
\begin{theorem}[{\cite[Theorem~1.7]{L23}}]\label{thm:L23}
There exists a constant $C>1$ and a polynomial-time computable $(n,k)$ sumset disperser $\Disp\colon\{0,1\}^n\to\{0,1\}$ for any $k\geq C\log{n}$.\footref{footnote: extract not disperse}
\end{theorem}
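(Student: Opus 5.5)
This theorem is imported from \cite{L23}, so I would not rebuild the construction. Instead I would reduce the statement to the extractor guarantee proved there, as \cref{footnote: extract not disperse} suggests. The plan is to take Li's explicit extractor for sumset sources and show that any extractor with error below $1/2$ is automatically a disperser in the sense defined above.

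\textbf{Step 1 (invoke the extractor).} I would cite \cite[Theorem~1.7]{L23} in its extractor form. It gives a constant $C>1$ and a polynomial-time computable $\Ext\colon\zo^n\to\zo$ such that for every $(n,k_1,k_2)$ sumset source $\X=\X_1+\X_2$ with $k_1,k_2\ge C\log n$, we have $\abs{\Ext(\X)-\U_1}\le\eps$ for some $\eps<1/2$. For $n$ large, $\eps$ can be taken to be $n^{-\Omega(1)}$. Specialising to $k_1=k_2=k\ge C\log n$ covers exactly the $(n,k,k)$ sumset sources in the definition of an $(n,k)$ sumset disperser. Set $\Disp=\Ext$.

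\textbf{Step 2 (extractor implies disperser).} Suppose, for contradiction, that $\supp(\Disp(\X))$ is a single bit $b$ for some $(n,k,k)$ sumset source $\X$. Then $\Disp(\X)$ is the point mass on $b$, whose statistical distance from $\U_1$ is exactly $1/2$. This contradicts $\eps<1/2$, so $\supp(\Disp(\X))=\zo$.

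\textbf{Step 3 (small $n$ and entropy conventions).} The definition of a sumset source uses the equalities $\minH(\X_i)=k$, which are covered by Li's hypothesis $\minH(\X_i)\ge k$. The only remaining case is finitely many small $n$ where Li's error bound might not yet be below $1/2$. I would absorb these by enlarging $C$: once $C\log n>n$ there is no $k$ to check, so the statement holds vacuously for those $n$.

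\textbf{Main obstacle.} The real difficulty lies entirely inside \cite{L23}: extracting from sumset sources at entropy $O(\log n)$ with error below $1/2$. Li achieves this with a reduction to affine-type structure together with correlation breakers and linear seeded extractors. For our purposes it is a black box, and the only thing to verify is that the cited parameters ($k\ge C\log n$, error $<1/2$, one output bit, polynomial time) match the statement, which they do.
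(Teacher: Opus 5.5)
Your proposal is correct and matches how the paper treats this statement: the paper imports it from \cite{L23} as a black box, with a footnote noting that Li actually constructs extractors. The disperser property then follows exactly as in your Step~2, since a constant output on an $(n,k,k)$ sumset source would be at statistical distance $1/2$ from $\U_1$. Your aside that the error is $n^{-\Omega(1)}$ is not needed and you should not rely on it, because nothing beyond error strictly below $1/2$ is used.
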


We also need to define a stronger object---sumset extractors.
\begin{definition}[Sumset Extractor]
For $n, k\in \N$, a function $\Ext: \zo^n \to \zo^m$ is an \emph{$(n, k)$ sumset extractor} with error $\eps$ if for all $(n, k, k)$ sumset sources $\X$, it holds that $\Ext(\X) \approx_{\eps}\U_m$.
We furthermore say $\Ext$ is strong in the first source if for all sumset sources $\X = \X_1 + \X_2$, it holds that $(\X_1, \Ext(\X)) \approx_{\eps} (\X_1, \U_m)$.
\end{definition}

We will use the following construction of a sumset extractor.
\begin{theorem}[\cite{CL22}]\label{thm: sumset extractor}
There exist universal constants $C > 1, \delta > 0, \gamma > 0$ such that for all $n, k, m\in \N$ where $k\ge \log^C(n)$ and $m = k^{\delta}$, there exists an explicit function $\Ext: \zo^n \to \zo^m$ that is an $(n, k)$-sumset extractor with error $n^{-\gamma}$ that is strong in the first source.\footnote{It is not explicitly mentioned in~\cite{CL22} that the sumset extractor $\Ext$ is strong in the first source. However, this is immediate from inspecting their proof.}
\end{theorem}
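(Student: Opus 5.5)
We do not re-derive the construction of~\cite{CL22}. The plan is to take their $(n,k)$ sumset extractor as a black box for the non-strong guarantee $\Ext(\X_1+\X_2)\approx_{n^{-\gamma}}\U_m$. We then revisit their analysis only to upgrade it to strongness in the first source. That is, we aim to show $(\X_1,\Ext(\X))\approx_{\eps}(\X_1,\U_m)$, equivalently
\[
\E_{x_1\sim\X_1}\bigl|\Ext(x_1+\X_2)-\U_m\bigr|\le n^{-\gamma}
\]
(after possibly shrinking $\gamma$). This cannot be obtained generically: for a fixed $x_1$ the source $x_1+\X_2$ is just a shift of an arbitrary $k$-entropy source, for which no deterministic extractor exists. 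So the argument must follow the structure of the~\cite{CL22} proof.

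The first step is to recall that their analysis has the usual ``alternating extraction'' shape. A linear seeded extractor or somewhere-random condenser $\LExt$ is applied to $\X$. By linearity,
\[
\LExt(\X,s)=\LExt(\X_1,s)+\LExt(\X_2,s)\quad\text{for every seed } s.
\]
The proof then repeatedly fixes short functions of one source, by the chain rule for min-entropy (\cref{lem: chain rule min entropy}), so that the remaining pieces become independent. It ends with a step in which some somewhere-random or uniform string is combined with fresh entropy of the \emph{other} source to produce the output.

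The second step is to go through their fixings and record which random variables are fixed. We order the argument so that every quantity depending only on $\X_1$ is fixed before the final extraction step; where necessary, we swap the roles of the two sources, which is allowed since both have min-entropy $k$. After this reordering, the final output is produced by a strong (seeded or two-source) extractor whose source is a function of $\X_2$, conditioned on all $\X_1$-dependent quantities. Its error bound then holds for a $1-n^{-\Omega(1)}$ fraction of fixings of those quantities, and hence holds jointly with $\X_1$. Any additional fixing of $\X_1$ beyond what~\cite{CL22} already fixes costs only a polynomially small error via \cref{lem: chain rule min entropy}, because $\X_2$ retains $k-\poly\log(n)$ entropy, which is enough for the final step since $m=k^{\delta}$.

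The main obstacle is the middle of their argument, where the two sources are ``mixed'' through correlation breakers. Some intermediate strings there depend on both $\X_1$ and $\X_2$. We must check that conditioning on all of $\X_1$, rather than only on the short projections they fix, does not destroy the independence they use. The approach I would try first is to observe that every object in their proof that depends on $\X_1$ is either a deterministic function of $\X_1$ or a sum (function of $\X_1$) $+$ (function of $\X_2$). In both cases, conditioning on $\X_1=x_1$ turns it into a fixed shift of a function of $\X_2$. Their hybrid argument then goes through verbatim with $\X_2$ alone as the remaining randomness, and averaging over $x_1$ yields the claimed strong error $n^{-\gamma}$.
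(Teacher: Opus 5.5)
The paper itself gives no argument beyond citing \cite{CL22} and asserting that strongness is visible from inspecting their proof. Your overall plan therefore agrees with it: strongness is not generic at these parameters and must be read off their analysis. The problem is the mechanism you propose for the decisive step.

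In your last paragraph you condition on $\X_1=x_1$ and claim the \cite{CL22} hybrid argument then ``goes through verbatim with $\X_2$ alone as the remaining randomness.'' This cannot work. The observation supporting it is also vacuous: once $x_1$ is fixed, \emph{every} quantity computed by $\Ext$ is a function of $x_1+\X_2$, hence of $\X_2$ alone, whatever the structure of the construction. Running the analysis with only $\X_2$ random is therefore exactly deterministic extraction from the single source $x_1+\X_2$, which you ruled out in your first paragraph.

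Concretely, the correlation-breaker/alternating-extraction phase needs a seed that is independent of the source it extracts from. In a sumset analysis that seed comes from $\X_1$: for instance, a good row $\LExt(\X_1,s)+\LExt(\X_2,s)$ becomes a function of $\X_1$ after fixing the short string $\LExt(\X_2,s)$. Conditioning on $\X_1$ before this phase fixes the seed and destroys precisely the randomness the hybrid argument uses.

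Your chain-rule remark is also misdirected. Fixing functions of $\X_1$ costs nothing from $\X_2$'s entropy. Conditioning on all of $\X_1$ is not a short fixing to which \cref{lem: chain rule min entropy} applies.

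What is needed is to leave their analysis untouched and condition on $\X_1$ only at the very end; their analysis fixes only short functions of each source. You then have to verify in \cite{CL22} that, after those fixings, the final output has the form $\LExt(\X_2,\mathbf{R})+\LExt(\X_1,\mathbf{R})$, or more generally a seeded extractor that is strong in its seed applied to the $\X_2$-part. Here $\mathbf{R}$ must be a deterministic function of $\X_1$ that is close to uniform, and $\X_2$ must keep enough min-entropy conditioned on its own fixings.

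Given that form, conditioning on $\X_1=x_1$ merely fixes the seed to $r(x_1)$ and shifts the output by the constant $\LExt(x_1,r(x_1))$. Strongness of the last extractor in its seed then gives $(\X_1,\Ext(\X))\approx(\X_1,\U_m)$, and the fixings of $\X_2$-functions are removed by averaging over them.

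Which source supplies that final seed must be a single global choice. This is legitimate by the symmetry of $\X_1+\X_2$ with equal entropies $k$. By contrast, the step-by-step swapping of roles and the reordering of fixings proposed in your second paragraph are not justified. Fixings in alternating-extraction arguments generally cannot be permuted without breaking the conditional independence they are designed to maintain.
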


We lastly define a relaxation of extractors--condensers.
\begin{definition}[Sumset condenser]
A function $\Cond: \zo^n \to \zo^m$ is a \emph{$(k_1, k_2, k_{\out})$ sumset condenser} with error $\eps$ if for all $(n, k_1, k_2)$ sumset sources $\X$, it holds that $\minH^{\eps}(\Cond(\X)) \ge k_{\out}$.
\end{definition}
In \cref{sec:condenser}, we will construct an explicit condenser for sumset sources.

\subsection{Seeded extractors and linear seeded extractors}\label{subsec: seeded extractor}

\begin{definition}[Seeded extractor]\label{def: lin seeded extractor}
For $n, k, d, m\in \N$ and $\eps > 0$, we say $\Ext: \zo^n \times \zo^d \to \zo^m$ is a \emph{seeded extractor} for min-entropy $k$ with error $\eps$ (or $(k, \eps)$-seeded extractor in short) if for all $(n, k)$ source $\X$:
\[
    \Ext(\X, \U_d) \approx_{\eps} \U_m \;.
\]
We say $d$ is the seed length of $\Ext$.
Furthermore, we say $\Ext$ is a \emph{linear seeded extractor} if for all $y\in \zo^d$, $\Ext(\cdot, y)$ is a linear map.
\end{definition}

We will utilize the following linear seeded extractor.
\begin{theorem}[\cite{T01, RRV02}]\label{thm: trevisan's extractor}
For every $n, k, m \in \N$ and $\eps > 0$, with $m\le k\le n$, there exists an explicit linear seeded extractor $\LExt: \zo^n \times \zo^d \to \zo^m$ for min-entropy $k$ and error $\eps$ where $d = O\left(\frac{\log^2(n/\eps)}{\log(k / m)}\right)$.
\end{theorem}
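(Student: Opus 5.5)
The plan is to instantiate the Trevisan extractor with the weak designs of Raz--Reingold--Vadhan and then check that the construction is linear in the source for every fixed seed. Concretely, fix a binary linear code $C\colon \zo^n \to \zo^{\bar n}$ with $\bar n = \poly(n/\eps)$ that is $(1/2-\eps/(2m))$-list decodable with list size $\poly(m/\eps)$. Such a code is, for example, Reed--Solomon concatenated with Hadamard, which is $\F_2$-linear. Set $t = \log \bar n = O(\log(n/\eps))$. Take a weak $(m,t,\rho)$-design $S_1,\dots,S_m \subseteq [d]$: each $|S_i| = t$ and $\sum_{j<i} 2^{|S_i\cap S_j|} \le \rho(i-1)$ for all $i$. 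The extractor is
\[
\LExt(x,y) = \bigl(C(x)_{y|_{S_1}},\dots,C(x)_{y|_{S_m}}\bigr),
\]
where $y|_{S_i}\in\zo^t$ is read as an index into $[\bar n]$. For fixed $y$, each output bit is a fixed coordinate of $C(x)$, hence a linear function of $x$. So linearity is immediate and the work is in the error analysis and the seed length.

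For the error, I would use the standard reconstruction argument. Suppose a test $D$ distinguishes $\LExt(\X,\U_d)$ from $\U_m$ with advantage greater than $\eps$ for some flat $(n,k)$ source $\X$. First apply the hybrid argument and Yao's lemma. These give an index $i$ and a next-bit predictor that guesses $C(x)_{y|_{S_i}}$ from the previous output bits with advantage $\eps/m$. Next, average over $x\sim\X$ and over the seed bits outside $S_i$, and fix those bits. After this fixing, each earlier output bit $j<i$ depends only on $y|_{S_i\cap S_j}$, so it can be supplied as a truth table of $2^{|S_i\cap S_j|}$ bits. The weak-design property bounds the total advice by $\rho m$ bits. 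The result is a predictor for $C(x)$ with agreement $1/2+\eps/(2m)$ for at least an $\eps/2$-fraction of $x\in\supp(\X)$. List decoding then lets us recover each such $x$ from $\rho m + O(\log(m/\eps))$ advice bits. Therefore at most $2^{\rho m + O(\log(m/\eps))}$ strings are "good", which contradicts $\minH(\X)\ge k$ once $k \ge \rho m + O(\log(m/\eps))$.

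The remaining task is choosing $\rho$ and invoking the weak-design construction: RRV give weak designs with $d = O(t^2/\log\rho)$ for $\rho>1$. Choose $\rho$ so that $\rho m + O(\log(m/\eps)) \le k$. When $k/m$ is bounded away from $1$, take $\rho = \Theta(k/m)$, which gives $d = O(\log^2(n/\eps)/\log(k/m))$ as claimed. I expect the main obstacle to be the regime where $k/m$ is close to $1$, where $\rho$ must be close to $1$. Here I would first try the RRV weak designs with $\rho = 1+\gamma$, which have seed length $O(t^2\log(1/\gamma))$. If the absorbed $\log(m/\eps)$ loss prevents matching the stated bound exactly, I would fall back on the standard trick. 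It is to output $m' = m - O(\log(1/\eps))$ bits via the construction and then concatenate $O(\log(1/\eps))$ further linear extractor runs on fresh seed blocks, keeping the total seed within the stated $O(\cdot)$.
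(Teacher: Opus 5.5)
Your proposal is the standard argument behind the citation: the paper gives no proof of its own and simply invokes Trevisan's extractor instantiated with Raz--Reingold--Vadhan weak designs. As you say, linearity comes from each output bit being a seed-selected coordinate of a binary linear code.

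The regime $k/m\to 1$ needs no fallback. Taking $\rho=(k+m)/(2m)$ in RRV's bound $d=\lceil t/\ln\rho\rceil\cdot t$ gives $\ln\rho=\Theta(\log(k/m))$ and leaves slack $k-\rho m=(k-m)/2$. So the claimed $d=O(\log^2(n/\eps)/\log(k/m))$ follows whenever $k-m\ge\Omega(\log(m/\eps))$, which covers the paper's use with $m\le(1-\delta)k_1$. Also, the fraction of good $x$ after averaging is $\eps/(2m)$, not $\eps/2$; the $O(\log(m/\eps))$ term absorbs this loss.
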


\section{Single-output oblivious local decision trees}\label{sec: seperator based lower bound}

In this section, we prove lower bounds for oblivious local decision trees with a single output $(m = 1)$ using separator arguments. We show that any such tree computing a sumset disperser must have large depth, thereby establishing \cref{thm:MainSingle}.

In \cref{sec:singleOutputProof}, we will prove the following technical version of \cref{thm:MainSingle}.
\begin{restatable}[Technical version of \cref{thm:MainSingle}]{theorem}{TechnicalSingle}\label{theorem: uniform decision tree theorem techinical}
Let $n, d, \ell, k\in \N$ be such that 
\(d(1+\ell^2(k+1)/n)+k\leq n\).
Let $\Disp\colon \zo^n\to\zo$ be an $(n, k)$ sumset disperser.
Then any oblivious $\ell$-local decision tree computing $\Disp$ requires depth at least $d+1$.
\end{restatable}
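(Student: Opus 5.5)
We argue by contradiction, following the separator approach from \cref{sec:overview}. Suppose $T$ is an oblivious $\ell$-local decision tree of depth $d$ computing $\Disp$, where $d$ satisfies $d(1+\ell^2(k+1)/n)+k\le n$. (A tree of smaller depth can be padded to depth exactly $d$.) Write $T(x)=h(L(x))$ with $L=(L_1,\dots,L_d)$ an $\ell$-local map. The goal is to exhibit sets $A,B\subseteq\zo^n$ with $\abs{A},\abs{B}\ge 2^k$ and $L(a)=L(a+b)$ for all $a\in A$, $b\in B$. Once we have them, split $A=A_0\cup A_1$ by the value of $h(L(a))$. The larger part has size at least $2^{k-1}$. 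To avoid losing this factor of $2$, we take $\abs{I}=k+1$ below, so that $\abs{A}=2^{k+1}$ and the larger part has size at least $2^k$. Then $T$ is constant on $A_j+B$, and the uniform distributions on $A_j$ and $B$ form an $(n,k,k)$ sumset source on which $\Disp$ is constant, a contradiction.

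\textbf{Constructing the separator.} Assume each $L_i$ depends on at most $\ell$ variables. Sort the variables by the number of $L_i$ in which they appear, and let $I$ be the $k+1$ least frequent. The total number of incidences is at most $d\ell$, so the $k+1$ least frequent variables together appear in at most $(k+1)d\ell/n$ functions. Let $S=\Nbr(I)\setminus I$ be the set of variables sharing some $L_i$ with a variable of $I$. Then
\[
\abs{S}\le (k+1)d\ell^2/n .
\]
Let $R=[n]\setminus(I\cup S)$. Every $L_i$ depends either only on $I\cup S$ or only on $S\cup R$. Call these families $Q_1$ and $Q_2$, with $d_2\le d$ outputs in $Q_2$. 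Fix $x_S=0$. Let $z^*$ be the most popular value of $Q_2(\cdot,0)$ on $\zo^R$, let $Y=Q_2(\cdot,0)^{-1}(z^*)$, and pick some $y^*\in Y$. Set
\[
A=\{0\}_S\times\zo^I\times\{y^*\}_R,\qquad B=\{0\}_S\times\{0\}_I\times(Y+y^*)_R .
\]

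\textbf{Verifying the properties.} Adding $b\in B$ changes no coordinate in $I\cup S$, so every $Q_1$ output is unchanged. On $R$, the sum $a+b$ equals some $y\in Y$, so every $Q_2$ output equals $z^*=Q_2(y^*,0)$; hence $L(a)=L(a+b)$. For the sizes, $\abs{A}=2^{k+1}$, and by averaging
\[
\abs{B}=\abs{Y}\ge 2^{\abs{R}-d}\ge 2^{\,n-(k+1)-(k+1)d\ell^2/n-d}.
\]
The hypothesis states $d+d\ell^2(k+1)/n+k\le n$, which gives $\abs{B}\ge 2^{-1}$, that is, $\abs{B}\ge 1$. That is not enough.

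\textbf{The main obstacle} is this bookkeeping to get $\abs{B}\ge 2^k$. The fix is to take $\abs{I}=k$ (so $\abs{S}\le kd\ell^2/n$) and avoid the factor-$2$ loss on $A$ in a different way. Split $B$ instead of $A$: every $b\in B$ gives $L(a+b)=L(a)$ for all $a$, so instead split $A$ only after observing that $T$ takes the values $h(L(a))$ on $A+B$. Alternatively, use the slack in the hypothesis to take $\abs{I}=k+1$ but $\abs{S}\le k\,d\ell^2/n$. I would try the simplest version first: $\abs{I}=k+1$ together with the cruder bound $\abs{S}\le (k+1)d\ell^2/n$, and recheck that the exponent for $\abs{B}$ comes out as
\[
n-d-(k+1)\bigl(1+d\ell^2/n\bigr)\ \ge\ k .
\]
If it does not, I would adjust the constants, for example by using a slightly different source definition (sources need min-entropy at least $k$), until the stated inequality is exactly what is needed. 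If the hypothesis turns out not to cover the case, then $(A,B)$ with $\abs{A}=2^{k+1}$ and $\abs{B}\ge 2^{k}$ would require a slightly stronger hypothesis, and the constant would have to be absorbed there.
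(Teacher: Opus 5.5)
\textbf{Gap: the bound on $\abs{B}$.} Your construction is the paper's: $I$ is the $k+1$ least frequent variables, $S=\Nbr(I)\setminus I$, $x_S=0$, $A=\{0\}_S\times\zo^I\times\{y^*\}$ and $B=\{0\}_S\times\{0\}_I\times(Y+y^*)$. Your verification that $L(a)=L(a+b)$ is correct. So is the passage to a colour class $A_j$ with $\abs{A_j}\ge 2^k$. However, the proposal does not prove the theorem. Your own estimate gives only $\abs{B}\ge 2^{-1}$, and none of the repairs you sketch is carried out or would work as stated. The factor-$2$ loss on $A$ cannot be avoided: $T(a+b)=h(L(a))$ depends on $a$, so you must restrict to one colour class, which is why $\abs{I}=k+1$ is needed. ``Splitting $B$ instead'' does not make $T$ constant on the sumset. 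Nothing in your argument justifies $\abs{S}\le kd\ell^2/n$. Strengthening the hypothesis would prove a different, weaker statement.

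\textbf{The missing observation.} You double-count $I$ when bounding $\abs{R}$. At most $(k+1)d\ell/n$ hyperedges meet $I$, each with at most $\ell$ vertices. Their union contains all of $S$ and also every variable of $I$ that some $L_i$ reads. So these variables number at most $(k+1)d\ell^2/n$; this is the paper's bound $\abs{\Nbr(I)\cup I}\le \alpha d\ell^2/n$ with $\alpha=k+1$. Variables of $I$ read by no $L_i$ are not covered by those edges. But $L$ ignores them, so you may let $B$ range freely over those coordinates as well, and they then contribute to $\log\abs{B}$ exactly as coordinates of $R$ do. Either way,
\[
\abs{B}\ \ge\ 2^{\,n-d-(k+1)d\ell^2/n}\ \ge\ 2^{k},
\]
and the last inequality is precisely the hypothesis $d(1+\ell^2(k+1)/n)+k\le n$. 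Together with $\abs{A_j}\ge 2^k$, this yields the contradiction with the sumset disperser property. The hypothesis is therefore exactly what is needed; the extra additive $k+1$ came from your count of $\abs{R}$, not from the statement.
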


Now we derive \cref{thm:MainSingle} by choosing appropriate parameters in \cref{theorem: uniform decision tree theorem techinical} and instantiating it with the explicit sumset disperser construction from \cite{L23} (\cref{thm:L23}).

\MainThmSingleOutput*
\begin{proof}
Let $\Disp\colon \zo^n\to\zo$ be the $(n,k)$ sumset disperser from \cref{thm:L23}, where $k=C_0\log n$. Let $C$ be an arbitrary constant $C>C_0$. Note that for $\ell\geq n / \sqrt{C_0 \log(n)}$, the theorem statement holds trivially. Therefore, in the following we assume that $\ell< n / \sqrt{C_0 \log(n)}$. Let 
\begin{align*}\label{eq:d}
d = \frac{n(n-k)}{n+(k+1)\cdot \ell^2} \;.
\end{align*}
First we note that $d\geq 1$ for every $\ell< n / \sqrt{C_0 \log(n)}$ and all large enough~$n$. Next we check that our choice of $d$ satisfies the requirements of \cref{theorem: uniform decision tree theorem techinical}.
Therefore, applying \cref{theorem: uniform decision tree theorem techinical} with these parameter choices yields a lower bound of $d+1$ on the depth of any $\ell$-local decision tree computing $\Disp$. 

To conclude the desired bound on the depth of $\ell$-local decision trees, we note that
for all $C>C_0$ and $k=C_0\log(n)$, the following holds for all large enough values of $n$,
\[
d = \frac{n(n-k)}{n+(k+1)\cdot \ell^2} > \frac{n^2}{n+C\ell^2\log(n)} \;.
\]

The second part of the theorem follows from the first one by observing that if $\ell=o(\sqrt{n/\log(n)})$, then
$C\ell^2\log(n)=o(n).$
\end{proof}

\subsection{Proof of \texorpdfstring{\cref{theorem: uniform decision tree theorem techinical}}{the lower bound}}\label{sec:singleOutputProof}
Our theorem follows from the following lemma, which identifies additive structure in arbitrary $\ell$-local maps.

\begin{lemma}[Sumsets in Local Maps]
\label{lemma: uniform decision tree key lemma}
For all $d, \ell, \alpha\leq n$ and all $\ell$-local maps $L: \zo^n \to \zo^d$, there exist sets $A, B\subseteq \zo^n$ such that for all $a\in A, b\in B$, we have that $L(a) = L(a + b)$. Furthermore, $\abs{A} \ge 2^{\alpha}, \abs{B} \ge 2^{n - d - \alpha d\ell^2 /n}$.
\end{lemma}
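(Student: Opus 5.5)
We follow the separator argument sketched in \cref{sec:overview}, now tracking parameters exactly. For each variable $j\in[n]$, let $\deg(j)$ be the number of output functions $L_i$ depending on $x_j$. Since each $L_i$ depends on at most $\ell$ variables, $\sum_j \deg(j)\le d\ell$. Sort the variables by $\deg$ in non-decreasing order and let $I$ be the first $\alpha$ of them. An averaging argument shows that the $\alpha$ smallest degrees sum to at most $\alpha d\ell/n$: the average of the smallest $\alpha$ values is at most the overall average. Hence at most $\alpha d\ell/n$ functions $L_i$ touch $I$. Let $S$ be the set of variables outside $I$ that appear in some such $L_i$. Then $\abs{S}\le \ell\cdot \alpha d\ell/n=\alpha d\ell^2/n$. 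Set $R=[n]\setminus(I\cup S)$, so $\abs{R}\ge n-\alpha-\alpha d\ell^2/n$.

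Next we split the outputs. Let $Q_1$ consist of the $L_i$ touching $I$; these depend only on $I\cup S$. Let $Q_2$ consist of the remaining outputs, which depend only on $R\cup S$. By construction no $L_i$ touches both $I$ and $R$. Fix $x_S=0$ and let $Q_2'\colon\zo^{\abs R}\to\zo^{d_2}$ be the restriction of $Q_2$, where $d_2\le d$. By pigeonhole some $z^*$ has preimage $Y(z^*)$ of size at least $2^{\abs R-d_2}$. Fix any $y^*\in Y(z^*)$, and define
\[
A=\{x: x_S=0,\ x_R=y^*\},\qquad B=\{x: x_S=0,\ x_I=0,\ x_R\in Y(z^*)+y^*\}.
\]

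It remains to verify the two properties. For $a\in A$ and $b\in B$, the point $a+b$ has the same $I$- and $S$-coordinates as $a$, so $Q_1(a+b)=Q_1(a)$. Its $R$-coordinates are $y^*+(y+y^*)=y\in Y(z^*)$ while $x_S=0$, so $Q_2(a+b)=z^*=Q_2(a)$. Hence $L(a)=L(a+b)$. For the sizes, $\abs A=2^{\alpha}$, and
\[
\abs B=\abs{Y(z^*)}\ge 2^{n-\alpha-\alpha d\ell^2/n-d}.
\]

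The main obstacle is the stray $-\alpha$ in the exponent for $\abs{B}$, which the statement does not have. I would first try to absorb it by noting that $I$ should not be counted twice. Variables of $I$ that appear in functions touching other variables of $I$ are already excluded from $S$, so the bound on $\abs{I\cup S}$ is at most $\alpha d\ell/n\cdot\ell$ whenever each such function contributes at most $\ell$ variables in total, including those in $I$. Thus
\[
\abs{I\cup S}\le \alpha d\ell^2/n,
\]
provided every $x_j\in I$ appears in at least one function. If some variable in $I$ appears in no function, it can be moved freely: put it in $A$ at no cost to $B$. More carefully, if $\deg(j)=0$ for some $j$, a direct construction already works, with $B$ containing the coordinate direction $e_j$ and the size accounting adjusted accordingly. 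I would handle this degree-zero case separately, so that in the main case $\abs R\ge n-\alpha d\ell^2/n$, which yields the stated bound $\abs B\ge 2^{n-d-\alpha d\ell^2/n}$.
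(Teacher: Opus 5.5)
Your route is the paper's separator argument: a low-degree set $I$, the separator $S$, fixing $x_S=0$, the largest fiber $Y(z^*)$ of $Q_2$, $A$ free on $I$, and $B$ a shifted fiber. Your main case is correct. If every variable of $I$ is read by some $L_i$, then $I\cup S$ is covered by the at most $\alpha d\ell/n$ functions touching $I$, so $\abs{R}\ge n-\alpha d\ell^2/n$.

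You are also right that variables of $I$ read by no $L_i$ need separate care. The paper's own proof glosses over exactly this point when it asserts $\abs{\Nbr(I)\cup I}\le \alpha d\ell^2/n$ from the edge count alone. That inequality is false, for example, whenever $d\ell^2<n$, since $\abs{I}=\alpha$.

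However, your treatment of this case is not yet a proof. ``A direct construction already works'' is asserted, not shown. Also, ``put it in $A$'' misidentifies the cost: such a variable is already free in $A$. What hurts you is that it has been removed from $R$, so the remedy is to make it free in $B$ as well. Your remark about $B$ containing $e_j$ points that way, but it needs to be done for all such variables at once, inside the same construction.

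The fix goes as follows. Let $I_0\subseteq I$ be the variables of $I$ read by no $L_i$, and let $I_1=I\setminus I_0$. Every variable of $I_1\cup S$ lies in one of the at most $\alpha d\ell/n$ functions touching $I$. Hence $\abs{I_1\cup S}\le \alpha d\ell^2/n$ and $\abs{R}\ge n-\abs{I_0}-\alpha d\ell^2/n$.

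Now take $B=\{x: x_S=0,\ x_{I_1}=0,\ x_R\in Y(z^*)+y^*\}$, leaving $x_{I_0}$ arbitrary. The verification is unchanged. The point $a+b$ still agrees with $a$ on $I_1\cup S$, and no $L_i$ reads $I_0$, so $Q_1(a+b)=Q_1(a)$ and $Q_2(a+b)=z^*=Q_2(a)$.

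The extra factor $2^{\abs{I_0}}$ exactly offsets the loss in $R$:
\[
\abs{B}=2^{\abs{I_0}}\abs{Y(z^*)}\ge 2^{\abs{I_0}+\abs{R}-d}\ge 2^{\,n-d-\alpha d\ell^2/n}.
\]
This covers all cases uniformly, with no case split.
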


\begin{proof}
Let $L=(L_1,\ldots,L_d)$, where each $L_i\colon\zo^n\to\zo$ is an $\ell$-local function. 
Consider an $\ell$-uniform hypergraph $H$ over vertex set $[n]$ with $d$ edges defined as follows. For each $i\in [d]$, let $v_1, \dots, v_{\ell}$ be the variables used in the function $L_i$ (if $L_i$ depends on $t<\ell$ variables, then we add arbitrary $\ell-t$ variables to the set). Then the hypergraph $H$ contains the edge $(v_1, \dots, v_{\ell})$. 

We claim there exists a set $I\subseteq [n]$ of size $|I|=\alpha$ such that $\abs{\Nbr(I) \cup I} \le \alpha d \ell^2/n$.
For this, note that the expected number of edges a random vertex appears in is at most $d\ell/n$. Consider a random set of $\alpha$ vertices of the hypergraph.  By linearity of expectation, the expected number of edges that these vertices appear in, is at most $\alpha d \ell /n $. This implies there exists a set $I$ of size $\alpha$ such that the number of edges containing at least one vertex from $I$ is at most $\alpha d \ell /n$.
For this set $I$, it holds that $\abs{\Nbr(I) \cup I}\le \alpha d \ell^2 /n$, as desired.

Let the separator set $S$ be the set of all neighbors of~$I$ without the vertices from~$I$, i.e., $S = \Nbr(I)\setminus I$. Let us denote the remaining vertices by $R = [n] \setminus (S\cup I)$.
Let $\sigma = \abs{S}$ and let $\rho = \abs{R}$. We remark that $\abs{R}\geq n - \abs{\Nbr(I) \cup I}\geq n - \alpha d \ell^2/n$.

Note that by construction, no edge contains a vertex from~$I$ and a vertex from~$R$ simultaneously. Therefore, for every edge $e\in H$, either  $e\subseteq I\cup S$ or $e\subseteq R\cup S$. Let $Q_1\colon \zo^{\alpha+\sigma}\to \zo^{d_1}$ be the set of functions in~$L$ that depend on the variables from $I \cup S$, and $Q_2\colon \zo^{\rho + \sigma}\to \zo^{d_2}$ be the remaining functions in~$L$ (that depend on the variables from $R \cup S$) for some $d_1 + d_2 = d$. Now, up to a permutation of the coordinates, $L(x) = (Q_1(x_I, x_S), Q_2(x_R, x_S))$.

Using this decomposition of $L$, we will be able to define our desired sets $A$ and $B$. To do that, we will fix the variables in the separator $S$ to be $0^{\sigma}$, and then carefully define our sets $A, B$.
First, for each $z\in \zo^{d_2}$, define 
\[
Y(z) = \{y\in \zo^{\rho}: Q_2(y, 0^{\sigma}) = z\}\;.
\]
Let $z^* = \argmax_z \abs{Y(z)}$, the outcome with the largest preimage with respect to $Q_2(\cdot, 0^{\sigma})$. Since $z^*$ is the outcome with the largest preimage size, by an averaging argument, we obtain that $\abs{Y(z^*)} \ge 2^{\rho - d_2}\ge 2^{\rho - d}$.
Fix $y^*\in Y(z^*)$ to be arbitrary.

We are now ready to define our sets $A, B$.
Let $A = A_S\times A_I\times A_R$ and $B = B_S\times B_I\times B_R$, where 
\begin{align*}
A_S&=\{0^{\sigma}\}\;, & A_I&=\zo^{\alpha}\;, & A_R&=\{y^*\}\;;\\
B_S&=\{0^{\sigma}\}\;, & B_I&=\{0^{\alpha}\}\;, & B_R&=Y(z^*) + y^* \;.
\end{align*}

We first prove lower bounds on the sizes of $A$ and $B$, and later show that $L(a) = L(a+b)$ for all $a\in A$ and $b\in B$. By definitions of $A, B$, we see that $\abs{A} = 2^{\alpha}$ and $\abs{B} = \abs{Y(z^*)}\ge 2^{\rho - d}$.
From $\rho=\abs{R}\geq n - \abs{\Nbr(I) \cup I}\geq n - \alpha d \ell^2/n$, we conclude that $\abs{B} \ge 2^{n - d - \alpha d\ell^2/n}$, as desired.

It remains to show that for all $a\in A, b\in B: L(a) = L(a + b)$. To do that, by the decomposition of $L$, it suffices to prove the corresponding claims for $Q_1$ and $Q_2$.

We first claim that for all $(a_I, a_S)\in A_I\times A_S$ and $(b_I, b_S)\in B_I\times B_S$, it holds that $Q_1(a_I, a_S) = Q_1(a_I + b_I, a_S + b_S)$. Indeed, since $A_S = B_S = \{0^{\sigma}\}$ and $B_I = \{0^{\alpha}\}$, we have that $a_I = a_I + b_I$, and $a_S = a_S + b_S$; and the claim follows.

Next, we claim that for all $(a_R, a_S)\in A_R\times A_S$ and $(b_R, b_S)\in B_R\times B_S$, it holds that $Q_2(a_R, a_S) = Q_2(a_R + b_R, a_S + b_S)$. Since $A_S = B_S = \{0^{\sigma}\}$ and $A_R = \{y^*\}$, it suffices to show that $Q_2(y^*, 0^{\sigma}) = Q_2(y^* + b_R, 0^{\sigma})$.
As $y^*\in Y(z^*)$, we have that $Q_2(y^*, 0^{\sigma}) = z^*$.
Also, since $B_R = y^* + Y(z^*)$, $y^* + b_R\in Y(z^*)$ and hence, $Q_2(y^* + b_R, 0^{\sigma}) = z^* = Q_2(y^*, 0^{\sigma})$ as desired.

With these two claims, we have that for all $a\in A, b\in B, L(a) = L(a + b)$ as desired.
\end{proof}

Equipped with \cref{lemma: uniform decision tree key lemma}, we are ready to prove \cref{theorem: uniform decision tree theorem techinical}.
\TechnicalSingle*
\begin{proof}
By way of contradiction, assume there exists an oblivious $\ell$-local decision tree $T$ of depth $d$ computing $\Disp$.
Let $T$ be such that $T(x) = h(L(x))$ where $L: \zo^n\to \zo^d$ is an $\ell$-local map and $h: \zo^d \to \zo$.

We apply \cref{lemma: uniform decision tree key lemma} to $L$ with $\alpha=k+1$ to obtain $A, B \subseteq \zo^n$ such that $\abs{A} \ge 2^{\alpha}, \abs{B} \ge 2^{n - d - \alpha d \ell^2/n}$ and for all $a\in A, b\in B$, it holds that $L(a) = L(a + b)$.

For $z\in \zo$, let $A_z = \{y\in A: h(L(y)) = z\}$.
Without loss of generality, let $\abs{A_0} \ge \abs{A} / 2$.
Then, for all $a_0\in A_0$ and $b\in B$, it holds that $L(a_0) = L(a_0 + b)$.
Since $h(L(a_0)) = 0$, we infer that $h(L(a_0 + b)) = h(L(a_0)) = 0$.
Thus, $T(A_0 + B) = h(L(A_0 + B)) = \{0\}$.
However, as $\abs{A_0} \ge 2^{\alpha-1}\ge 2^k$ and $\abs{B}\ge 2^{n - d - \alpha d \ell^2/n}\ge 2^k$, we obtain a contradiction to the fact that $T$ is a sumset disperser for min-entropy $k$.
\end{proof}

\section{Multi-output oblivious local decision trees}\label{sec: noise sensitivity based lower bound}

In this section, we establish lower bounds for multi-output oblivious local decision trees by exploiting the fact that modifying a small number of input bits induces only a small change in the Hamming distance of the outputs of local maps. We use this observation to identify sumsets that any local map sends to a small number of outputs and, consequently, that are also mapped to a small number of outputs by any oblivious local decision tree. To prove \cref{thm:MainMultiple}, we then construct a function (specifically, a sumset condenser) for which such sumsets must be mapped to many distinct outputs.

In \cref{sec:condenser}, we will construct the following condensers for sumset sources, which will serve as the hard function for our lower bound.
\begin{restatable}{theorem}{thmCondenser}\label{thm: explicit sumset condenser}
There exist universal constants $C' > C > 1, \gamma > 0$ with the following property.
For every constant $\delta > 0$ and $n, m, k_1, k_2\in \N$ such that $k_1\ge k_2 \ge \log^C(n)$ and $m \le (1 - \delta)k_1$, there exists a polynomial-time computable function $\Cond: \zo^n \to \zo^m$ that is a $(k_1, k_2, k_{\out})$ sumset condenser  with error $n^{-\gamma}$, where $k_{\out} = m - \log^{C'}(n)$.
\end{restatable}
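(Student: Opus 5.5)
We follow the outline of \cref{subsec: proof overview condenser}. Set $\eps_0 = n^{-\gamma_0}$ for a small constant $\gamma_0$. Let $\Ext\colon \zo^n\to\zo^t$ be the sumset extractor of \cref{thm: sumset extractor} for min-entropy $k_2$; it is strong in the first source, has $t = k_2^{\delta'}$ and error $n^{-\gamma''}$. Since an $(n,k_1,k_2)$ sumset source with $k_1\ge k_2$ is in particular an $(n,k_2,k_2)$ sumset source, it suffices to apply $\Ext$ at entropy $k_2$. (If needed, we first view $\X_1$ as a convex combination of flat sources of entropy exactly $k_2$; strongness should still give closeness, after averaging, for the original $\X_1$ of larger entropy.)

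Let $\LExt\colon\zo^n\times\zo^{d}\to\zo^m$ be the linear seeded extractor of \cref{thm: trevisan's extractor} for min-entropy $k_1$ and error $\eps_0$. Because $m\le(1-\delta)k_1$, the ratio $\log(k_1/m)$ is bounded below by a constant depending on $\delta$, so the seed length is $d=O(\log^2 n)$. We want $t\ge d$ and truncate the output of $\Ext$ to $d$ bits; this holds as soon as $k_2\ge\log^C n$ with $C$ large, since then $k_2^{\delta'}\ge \log^2 n$. Define $\Cond(x)=\LExt(x,\Ext(x))$.

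\textbf{Analysis.} Write $\X=\X_1+\X_2$ and $\mathbf{S}=\Ext(\X)$. Strongness gives $(\X_1,\mathbf{S})\approx (\X_1,\U_d)$. By the data processing inequality (\cref{lem: data processing inequality}) applied to $(x_1,s)\mapsto\LExt(x_1,s)$, we get
\[
\LExt(\X_1,\mathbf{S})\approx_{n^{-\gamma''}} \LExt(\X_1,\U_d)\approx_{\eps_0}\U_m .
\]
Hence $\minH^{\eps}(\mathbf{Z}_1)\ge m$ with $\eps=n^{-\gamma''}+\eps_0$, where $\mathbf{Z}_1=\LExt(\X_1,\mathbf{S})$. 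Linearity gives $\Cond(\X)=\mathbf{Z}_1+\mathbf{Z}_2$ with $\mathbf{Z}_2=\LExt(\X_2,\mathbf{S})$. The main obstacle is that $\mathbf{Z}_2$ has no small support when $\X_2$ has large support. To handle this, we reduce to flat $\X_2$ of entropy exactly $k_2$. A general $\X_2$ is a convex combination of such flat sources, and smooth min-entropy lower bounds survive convex combinations up to the same error (mix the approximating distributions). With $\X_2$ flat, $\abs{\supp(\mathbf{Z}_2)}\le 2^{k_2+d}$.

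Apply \cref{lem: chain rule smooth min entropy} with $\X:=\mathbf{Z}_1$, $\Y:=\mathbf{Z}_2$, $\delta=\gamma=\sqrt{\eps}$. With probability at least $1-O(\sqrt\eps)$ over $z_2\sim\mathbf{Z}_2$,
\[
\minH^{\sqrt\eps}(\mathbf{Z}_1\mid \mathbf{Z}_2=z_2)\ge m-k_2-d-\log(1/\sqrt\eps).
\]
Conditioned on $\mathbf{Z}_2=z_2$, the output is $z_2+(\mathbf{Z}_1\mid\mathbf{Z}_2=z_2)$, and a fixed shift preserves smooth min-entropy. Averaging over $z_2$, and treating the bad $z_2$ as additional error, gives $\minH^{O(\sqrt\eps)}(\Cond(\X))\ge m-k_2-O(\log^2 n)$.

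This bound loses $k_2$, not $\log^{C'}n$, which is the remaining gap. To close it, we further condition $\X_2$ so that it becomes a flat source of entropy exactly $\log^{C}n$. Each $\X_2$ of entropy $k_2$ is, after the convex decomposition, a mixture of flat sources on $2^{\log^C n}$ points, and these still satisfy the hypotheses of \cref{thm: sumset extractor} when paired with $\X_1$ (we also decompose $\X_1$ to entropy at least $\log^C n$). So throughout we may assume $k_2=\log^C n$. The loss is then $\log^C n+O(\log^2 n)\le\log^{C'}n$, and the error is $O(\sqrt\eps)\le n^{-\gamma}$ for suitable $\gamma$.
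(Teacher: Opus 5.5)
Your proposal is correct. Once you apply the reduction in your last paragraph, it is precisely the paper's argument: you instantiate $\Ext$ at entropy $\log^C(n)$ from the start and decompose $\X_2$ into flat sources of exactly that entropy; then $\Cond(x)=\LExt(x,\Ext(x)_{\pre})$, linearity splits the output into $\LExt(\X_1,\cdot)+\LExt(\X_2,\cdot)$, strongness plus data processing handles the first term, the support bound $2^{d+\log^C(n)}$ handles the second, and the chain rule for smooth min-entropy followed by a fixed shift finishes, with your observation that strongness for $\X_1$ of entropy $k_1>\log^C(n)$ survives averaging over a flat decomposition filling in a step the paper uses implicitly.
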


In \cref{subsec: technical multioutput lb}, we will prove the following technical version of \cref{thm:MainMultiple}.
\begin{restatable}[Technical version of \cref{thm:MainMultiple}]{theorem}{TechnicalMulti}\label{thm: uniform local trees cannot compute sumset condensers}
Let $n, m, \ell, \alpha, d$ be such that $2\ell d / n\le \alpha\le d$ and $m\le n$. 
Let $\Cond: \zo^n \to \zo^m$ be a $(k_1, k_2, k_{\out})$ sumset condenser with error $< \frac{1}{2}$, where 
\[
k_1 = n - d,\; k_2 = (\alpha n/2d\ell) \log(2d \ell / \alpha),\; k_{\out} = \alpha(\log(d/\alpha) + 3)\;.
\]
Then any oblivious $\ell$-local decision tree computing $\Cond$ requires depth at least $d+1$. 
\end{restatable}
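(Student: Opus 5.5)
Suppose, for contradiction, that $T$ is an oblivious $\ell$-local decision tree of depth $d$ computing $\Cond$. Write $T(x)=h(L(x))$ with $L\colon\zo^n\to\zo^d$ an $\ell$-local map and $h\colon\zo^d\to\zo^m$. The plan is to exhibit an $(n,k_1,k_2)$ sumset source $\X=\X_1+\X_2$ whose image under $L$ has support of size less than $2^{k_{\out}-1}$. Then $\Cond(\X)=h(L(\X))$ has support smaller than $2^{k_{\out}-1}$. Any distribution with min-entropy $k_{\out}$ puts mass at most $2^{-k_{\out}}\cdot 2^{k_{\out}-1}=1/2$ on that support, so it is at distance at least $1/2$ from $\Cond(\X)$. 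This contradicts the condenser property with error $<1/2$.

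\textbf{Constructing the first source.} Let $z^*$ be the output of $L$ with the largest preimage, and let $\X_1$ be uniform on $A=L^{-1}(z^*)$. Then $\abs{A}\ge 2^{n-d}$, so $\X_1$ has min-entropy at least $k_1=n-d$. If $\abs{A}$ is larger than needed, I can shrink $A$ to a flat subset of exactly that size; this preserves $L(a)=z^*$ for all $a\in A$.

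\textbf{Constructing the second source.} Each variable $v$ has degree $\deg(v)$, the number of functions $L_i$ depending on it, and $\sum_v \deg(v)\le d\ell$. By Markov's inequality, the set $G$ of variables with $\deg(v)\le 2d\ell/n$ has $\abs{G}\ge n/2$. Set $r=\alpha n/(2d\ell)$, which satisfies $r\ge 1$ by the hypothesis $\alpha\ge 2\ell d/n$. Let $\X_2$ be uniform over vectors supported on $G$ with Hamming weight exactly $r$. Its min-entropy is
\[
\log\binom{n/2}{r}\ge r\log\frac{n}{2r}=\frac{\alpha n}{2d\ell}\log\frac{d\ell}{\alpha},
\]
which is roughly $k_2$. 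The constants inside the logarithm need care here. I would adjust by choosing $\abs{G}$ and $r$ slightly differently, or note $\binom{N}{r}\ge (N/r)^r$ with $N\ge n/2$. If necessary I would instead use the Hamming ball of radius $r$ in $G$, or weight at most $r$, to match $k_2=r\log(2d\ell/\alpha)=r\log(n/r)$ exactly.

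\textbf{Bounding the image.} For $a\in A$ and $b\in\supp(\X_2)$, flipping the $r$ coordinates of $b$ changes at most $r\cdot 2d\ell/n=\alpha$ output bits of $L$. Hence $L(a+b)$ lies in the Hamming ball of radius $\alpha$ around $z^*$ in $\zo^d$. That ball has size $\sum_{i\le\alpha}\binom{d}{i}\le (ed/\alpha)^\alpha$, so its logarithm is at most $\alpha(\log(d/\alpha)+\log e)$, which is less than $k_{\out}-1=\alpha(\log(d/\alpha)+3)-1$.

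\textbf{Main obstacle.} The only delicate part is this parameter bookkeeping: making the min-entropy of $\X_2$ at least $k_2$ exactly as stated, and handling integrality of $r$ and of $2^{k_1}$. I expect to absorb these through the slack built into the constants, namely the factor $2$ inside the logarithm in $k_2$ and the $+3$ in $k_{\out}$.
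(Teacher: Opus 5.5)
Your construction is the paper's: the same $A=L^{-1}(z^*)$, the same observation that every $L(a+b)$ lies within Hamming distance $\alpha$ of $z^*$, and the same support-size contradiction. Your image-side count via $(ed/\alpha)^\alpha$ with $\log e<3$ is fine. The gap is at the step you flagged, and the slack you plan to absorb it with does not exist. Since $2d\ell/\alpha=n/r$, the target $k_2=r\log(n/r)$ is exactly the logarithm of $\binom{n}{r}\ge(n/r)^r$ taken over \emph{all} $n$ coordinates. The factor $2$ inside that logarithm makes $k_2$ larger; it gives no room. The $+3$ in $k_{\out}$ sits on the image side, so it cannot raise $\minH(\X_2)$.

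Confining $\X_2$ to $G$ costs a factor $\prod_{i<r}\frac{\abs{G}-i}{n-i}\le(\abs{G}/n)^r$ relative to $\binom{n}{r}$. Moreover $\abs{G}$ can be close to $n/2$: take almost half the variables of degree just above $2d\ell/n$ and the rest of degree $0$, with large average degree. So your estimate yields only $k_2-r$, and $\X_2$ is not certified as a legal second source, so the condenser guarantee cannot be invoked. Taking the ball in $G$ does not help: for $r=1$ it has $\abs{G}+1$ points, which can be well below $2^{k_2}=n$. The root cause is that you apply Markov's inequality to single variables, so the loss compounds over the $r$ coordinates of $b$.

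The paper avoids this by not restricting coordinates at all. It takes $B=\{b\in\zo^n:\abs{\Nbr(b)}\le\alpha\}$; your image argument uses only this property of $b$. It then applies Markov once, to $\abs{\Nbr(x)}$ for $x$ uniform in the Hamming ball of radius $w=3\alpha n/(4d\ell)=3r/2$ in $\zo^n$. Each coordinate of $x$ is $1$ with probability at most $w/n$, so $\E\abs{\Nbr(x)}\le wd\ell/n=3\alpha/4$, and at least a quarter of that ball lies in $B$. This loses a constant factor instead of roughly $2^{-r}$, and the enlarged radius pays for it: $\abs{B}\ge\frac14\binom{n}{\le 3r/2}\ge\binom{n}{\le r}\ge(n/r)^r=2^{k_2}$. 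That chain needs $r$ above a small constant; at $r=1$ it already fails.

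If you want to keep your $G$, the only real slack is the factor of roughly $e^r$ hidden in $\binom{N}{r}\ge(N/r)^r$. The sharper bound $\binom{N}{r}\ge\bigl(e(N-r+1)/r\bigr)^r/(e\sqrt r)$ with $N=\abs{G}>n/2$ beats the $2^{-r}$ loss once $r$ exceeds a small constant and $r/n$ is small. That computation has to be carried out, with small $r$ handled separately; it is not absorbed by the constants you named.
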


We now prove \cref{thm:MainMultiple} using  the sumset condenser from \cref{thm: explicit sumset condenser} as the hard function in \cref{thm: uniform local trees cannot compute sumset condensers}.

\MainThmMultipleOutput*
\begin{proof}
Let $C'>C>1$  be the constants from \cref{thm: explicit sumset condenser}. In the following, we will assume that $\ell \leq O(n / \log^{C'+3}(n))$. Let $\Cond: \zo^n \to \zo^m$ be the $(k_1, k_2, k_{\out})$ sumset condenser from \cref{thm: explicit sumset condenser} for 
\[
k_1=n/\log(\log(n)),\;
k_2=\log^C(n),\;
k_\out=\ell\cdot\log^{C'+2}(n),\;
m=2k_\out\;.
\]
First we check that the parameters satisfy the requirements of \cref{thm: explicit sumset condenser}. Indeed, $k_1\geq k_2\geq \log^C(n)$, and $m\leq (1-\delta)k_1$ for every $\ell \leq O(n / \log^{C'+3}(n))$. Finally, $k_\out\leq m-\log^{C'}(n)$ holds from $k_\out\geq \log^{C'}(n)$.

We now apply \cref{thm: uniform local trees cannot compute sumset condensers} with $d=n-n/\log(\log(n))$ and $\alpha = \ell\cdot \log^C(n)$, and use $\Cond$ as the hard function. To see that we satisfy the requirements of \cref{thm: uniform local trees cannot compute sumset condensers}, we note that 
$2\ell d/n \leq \alpha\leq d$ holds from $C'>C$, and 
$m\leq n$ for all $\ell\leq O(n / \log^{C'+3}(n))$. Moreover, $k_1=n-d$,  $k_2< (\alpha n/2d\ell) \log(2d \ell / \alpha)$, and $k_\out>\alpha(\log(d/\alpha) + 3)$. 

Therefore, the $(k_1, k_2, k_\out)$ sumset condenser $\Cond$ requires oblivious $\ell$-local decision trees of depth at least $d=n-n/\log\log{n}$.
\end{proof}

\subsection{\texorpdfstring{Proof of \cref{thm: uniform local trees cannot compute sumset condensers}}{Proof of the lower bound}}\label{subsec: technical multioutput lb}

In this subsection, we prove \cref{thm: uniform local trees cannot compute sumset condensers}. To this end, we first prove the following lemma concerning the concentration of certain sumset sources when acted on by a local map.

\begin{lemma}[Sumset concentration lemma]\label{lemma: sumset concentration}
Let $n, d, \ell, \alpha\in \N$ be such that $2d \ell/n \le \alpha \le d$.
Then, for any $\ell$-local map $L: \zo^n \to \zo^d$, there exist sets $A, B\subseteq \zo^n$ with $\abs{A}\ge 2^{n-d}, \abs{B} \ge \binom{n}{\le (\alpha n / 2d \ell)}$ such that $\abs{L(A+B)} \le \binom{d}{\le \alpha}$.
\end{lemma}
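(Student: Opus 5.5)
The plan is to take $A$ to be a large fiber of $L$ and $B$ to be a family of sparse vectors whose supports touch few output functions. Every point of $A+B$ is then mapped by $L$ into a small Hamming ball around a fixed output.

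\textbf{Choice of $A$.} Let $z^*\in\zo^d$ be a value with the largest preimage under $L$, and set $A=L^{-1}(z^*)$. By averaging, $\abs{A}\ge 2^{n-d}$.

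\textbf{Choice of $B$.} For each variable $i\in[n]$, let $w_i$ be the number of output functions $L_j$ that depend on $x_i$. Since each $L_j$ is $\ell$-local, $\sum_i w_i\le d\ell$, so the average weight is $\bar w\le d\ell/n$. For $W\subseteq[n]$, write $w(W)=\sum_{i\in W}w_i$. Let
\[
B=\{\,1_W : \abs{W}\le r,\ w(W)\le \alpha\,\},\qquad r=\alpha n/(2d\ell).
\]
The hypothesis $2d\ell/n\le\alpha$ guarantees $r\ge 1$.

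\textbf{Image bound.} For $a\in A$ and $1_W\in B$, the output $L_j(a+1_W)$ can differ from $L_j(a)=z^*_j$ only if $L_j$ reads some variable of $W$. At most $w(W)\le\alpha$ indices $j$ have this property. Hence $L(A+B)$ lies in the Hamming ball of radius $\alpha$ around $z^*$, and so $\abs{L(A+B)}\le\binom{d}{\le\alpha}$. This part is routine.

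\textbf{Size of $B$.} Fix $j\le r$ and pick a uniformly random $W$ of size $j$. Then $\E[w(W)]=j\bar w\le r\,d\ell/n=\alpha/2$. Markov's inequality gives $\Pr[w(W)>\alpha]\le 1/2$. Summing over $j$ yields $\abs{B}\ge\frac12\binom{n}{\le r}$.

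\textbf{Main obstacle: the factor of $2$.} The statement asks for $\binom{n}{\le r}$ with no $\frac12$, and I expect closing this gap to be the delicate step. I would try the following, in order.

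First, sharpen the Markov step so that the conditioned-out mass is strictly smaller. One way is to use that all weights are integers together with the exact inequality $\sum_i w_i\le d\ell$. Another is to choose $z^*$ and $A$ so that variables of weight $0$ are handled separately, since these contribute nothing to $w(W)$.

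Second, if that fails, enlarge $B$ so that it is no longer a set of indicator vectors of light sets. Map each heavy set $W$ (one with $w(W)>\alpha$) injectively to a light vector that is not already in $B$, for instance by a shifting or compression argument that swaps high-weight coordinates of $W$ for unused low-weight ones. Such a map exists exactly when the heavy sets are outnumbered by the unused light vectors, and the weight-sum bound should supply that.

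Third, as a fallback, absorb the constant into the parameters. The downstream use in \cref{thm: uniform local trees cannot compute sumset condensers} only needs $\log\abs{B}\ge k_2$ up to an additive constant, so a factor of $2$ would be harmless there. I would still aim to obtain the bound exactly as stated.
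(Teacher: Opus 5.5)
\textbf{Gap: the factor of $2$ is never closed.} Your construction matches the paper's. $A$ is the largest fiber of $L$. $B$ consists of sparse vectors that touch at most $\alpha$ output functions. The image of $A+B$ lies in the radius-$\alpha$ Hamming ball around $z^*$, and $\abs{B}$ is bounded via Markov. (Your condition $w(W)\le\alpha$ is slightly stronger than the paper's $\abs{\Nbr(b)}\le\alpha$, but it satisfies the same expectation bound.) The problem is exactly the one you flag: you only prove $\abs{B}\ge\frac12\binom{n}{\le r}$, and none of your three remedies is an argument.
\begin{itemize}
\item Integrality of the $w_i$, or separating out weight-$0$ variables, improves Markov by at most a lower-order amount in the worst case. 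It does not remove the constant factor.
\item The compression plan only restates what must be shown, namely that heavy sets are outnumbered by unused light vectors. It does not prove it.
\item The fallback proves a weaker lemma.
\end{itemize}

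\textbf{Missing idea: enlarge the radius.} The cap $\abs{W}\le r$ in your definition of $B$ plays no role in the image bound, so you may drop it. Then sample from a larger ball: you lose a constant factor in probability but gain a constant factor in radius, and enlarging the radius multiplies the count by far more than a constant. The paper takes $w=\frac32 r=\frac{3\alpha n}{4d\ell}$. For $x$ uniform in the ball of radius $w$, $\E\abs{\Nbr(x)}\le wd\ell/n=\frac34\alpha$. Markov then shows that at least a quarter of that ball lies in $B$, so
\[
\abs{B}\ge\tfrac14\binom{n}{\le 3r/2}\ge\binom{n}{\le r}.
\]
Your level-by-level Markov gives the same bound if you simply run it over all levels $j\le\frac32 r$. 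This also realizes your second remedy concretely: the unused light vectors are the light sets at levels between $r$ and $\frac32 r$.

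\textbf{Caveat on edge cases.} The final comparison of binomial tails needs $r$ to be neither tiny nor close to $n/2$. At $r=1$, for example, both tails stop at level $1$ and $\frac14\binom{n}{\le 3/2}\ge\binom{n}{\le 1}$ is false. The paper passes over this. It is harmless for the application, where $r$ is polylogarithmic.

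For literally all admissible parameters you would need a separate small-$r$ count. At $r=1$ one works as follows. The $k$ variables of weight above $\alpha$ are outnumbered by the $t$ variables of weight at most $\alpha/2$, i.e.\ $t\ge k+1$. Hence the $\binom{t}{2}\ge k$ light pairs, together with the empty set and the $n-k$ light singletons, give at least $n+1$ light sets.
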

\begin{proof}
By an averaging argument there exists $y_0\in \zo^d$ such that $\abs{L^{-1}(y_0)} \ge 2^{n-d}$. Let $A = L^{-1}(y_0)$.
For $x\in \zo^n$, define $\Nbr(x) = \{i\in [d]: L_i \textrm{ depends on some $j\in [n]$ s.t. } x_j = 1\}$.
Let $B = \{b\in \zo^n: \abs{\Nbr(b)}\le \alpha\}$.

We claim that $\abs{B}\ge \binom{n}{\le (\alpha n / 2\ell d)}$. 
To this end, consider a bipartite graph with the left vertex set $[n]$, the right vertex set $[d]$, and an edge $(j, i)$ in this bipartite graph if the $\ell$-local function $L_i$ depends on input $j$.
Since $L$ is an $\ell$-local map, the total number of edges in this graph is at most $d\ell$.
Hence, the average left degree in this graph is at most $d\ell/n$. 
For $w \in \N$, consider the quantity $\E_{x\in \zo^n, \abs{x} \le w}[\abs{\Nbr(x)}]$, obtained by choosing a random subset of size at most $w$ of the left vertices of $G$ and computing the total number of neighbors on the right. Since the average left degree in $G$ is at most $d \ell/n$, we infer that $\E_{x\in \zo^n, \abs{x} \le w}[\abs{\Nbr(x)}] \le wd\ell/n$.
We now apply Markov's inequality to obtain that $\Pr_{x\in \zo^n, \abs{x}\le w}[\abs{\Nbr(x)}\ge (4/3)wd\ell / n]\le \frac{3}{4}$.
Hence, $\Pr_{x\in \zo^n, \abs{x}\le w}[\abs{\Nbr(x)}\le (4/3)wd \ell / n]\ge \frac{1}{4}$.
Setting $w = \frac{3\alpha n}{4d\ell}$, we conclude that $\abs{B}\ge \frac{1}{4}\binom{n}{\le (3\alpha n/4d \ell)}\ge \binom{n}{\le (\alpha n / 2d \ell)}$, as desired.

Lastly, we show that $\abs{L(A+B)}\le \binom{d}{\le \alpha}$, proving our claim.
Let $a\in A, b\in B$ be arbitrary. 
Using the definition of $\Nbr$, we observe that $L(a+b)$ and $L(a)$ may only differ in output bits with indices in $\Nbr(b)$. Hence, the Hamming distance between $L(a+b)$ and $L(a)=y_0$ is at most $\abs{\Nbr(b)}$. By our choice of $B$, we have that $\abs{\Nbr(b)}\le \alpha$.
Since there are $\binom{d}{\le \alpha}$ many strings at Hamming distance at most $\alpha$ from $y_0$, we conclude that $\abs{L(A+B)}\le \binom{d}{\le \alpha}$, as desired.
\end{proof}

Equipped with \cref{lemma: sumset concentration}, we are ready to prove \cref{thm: uniform local trees cannot compute sumset condensers}.

\TechnicalMulti*
\begin{proof}
Assume that there exists an oblivious $\ell$-local decision tree $T$ of depth $d$ computing $\Cond$.
Let $T = h(L(\cdot))$ where $L: \zo^n\to \zo^d$ is an $\ell$-local map and $h$ is an arbitrary function.

We apply \cref{lemma: sumset concentration} to $L$ to infer that there exist $A, B\subseteq \zo^n$ such that $\abs{A}\ge 2^{n-d} = 2^{k_1}, \abs{B}\ge \binom{n}{\le \alpha n / 2d\ell}$, and $\abs{T(A+B)}\le \abs{L(A+B)}\le 2^{\alpha(\log(d/\alpha) + 2)}$. We note that $\abs{B} \ge 2^{k_2}$ since 
\[
\abs{B} \ge \binom{n}{\alpha n / 2d\ell} \ge \left(2d\ell / \alpha\right)^{\alpha n / 2d\ell} = 2^{k_2}\;.
\]

This implies that the output of $T$ on the sumset source $A+B$ is at statistical distance $\ge \frac{1}{2}$ from all distributions with min-entropy $\alpha(\log(d/\alpha) + 2) + 1 \le \alpha(\log(d/\alpha) + 3)=k_{\out}$. However, this contradicts the fact that $T$ computes $\Cond$. 
\end{proof}

\section{Construction of a sumset condenser}\label{sec:condenser}

In this section, we prove \cref{thm: explicit sumset condenser}, i.e., we construct a sumset condenser that serves as the hard function in \cref{thm:MainMultiple}. 
\thmCondenser*

\begin{proof}
Let $C_{\ref{thm: sumset extractor}}, \delta_{\ref{thm: sumset extractor}}, \gamma_{\ref{thm: sumset extractor}}$ be the universal constants from \cref{thm: sumset extractor}.
We let our universal constant $C > 0$ be large enough constant such that $C \ge C_{\ref{thm: sumset extractor}}$ and $C \ge 3 / \delta_{\ref{thm: sumset extractor}}$.

We let $\Ext: \zo^n \to \zo^{m_0}$ be the sumset extractor from \cref{thm: sumset extractor} for $(n, \log^C(n), \log^C(n))$ sumset sources, where $m_0=(\log^C(n))^{\delta_{\ref{thm: sumset extractor}}}\geq \log^3(n)$ by our choice of $C\geq 3 / \delta_{\ref{thm: sumset extractor}}$. Note that the choice $C \ge C_{\ref{thm: sumset extractor}}$ guarantees that we satisfy the entropy requirement of \cref{thm: sumset extractor}.

We let $\LExt: \zo^n \times \zo^d \to \zo^m$ be the linear seeded extractor from \cref{thm: trevisan's extractor} for min-entropy $k_1$ with error $\eps_0=n^{-\gamma_{\ref{thm: sumset extractor}}}$, where $d= O\left(\frac{\log^2(n/\eps_0)}{\log(k_1 / m)}\right)=O(\log^2(n))$. Note that our parameters $m\leq k_1\leq n$ satisfy the requirements of \cref{thm: trevisan's extractor}.

We are now ready to present our construction of the condenser $\Cond$.
On input $x\in\zo^n$, we compute $y = \Ext(x)\in\zo^{m_0}$. We then take the prefix $y_{\pre}$ of $y$ of length $d$, and output $\LExt(x, y_{\pre})\in\zo^m$. We note that for all large enough~$n$, the length of $y$ is $m\geq \log^3(n) \geq d$ as $d=O(\log^2(n))$.

We now analyze our construction. Let $\X = \X_1 + \X_2$ be an arbitrary $(n, k_1, k_2)$ sumset source. Without loss of generality we assume that $\X_2$ is an $(n,\log^C(n))$ flat source. Indeed, $\X_2$ is a convex combination of such flat sources (see, e.g., \cite[Lemma~6.10]{V12}). Then, $\X$ is a convex combination of sumset sources where the second source is a flat source with min-entropy exactly $\log^C(n)$. Now condensing from each such source, we obtain a condenser (with the exact same parameters) from the original source $\X$.

As $\LExt$ is a linear seeded extractor, our output distribution can be written as $\LExt(\X, \Y_{\pre}) = \LExt(\X_1, \Y_{\pre}) + \LExt(\X_2, \Y_{\pre})$.
We first argue that $\LExt(\X_1, \Y_{\pre})$ is close to uniform.
As $\Ext$ is a strong two source extractor, we are guaranteed that $(\X_1, \Y) \approx_{\eps_0} (\X_1, \U_{m_0})$ where $\eps_0 = n^{-\gamma_{\ref{thm: sumset extractor}}}$.
Since $\Y_{\pre}$ is a prefix of $\Y$, using the data processing inequality (\cref{lem: data processing inequality}), we obtain that $(\X_1, \Y_{\pre}) \approx_{\eps_0} (\X_1, \U_d)$. From this, using the triangle inequality and the data processing inequality again (\cref{lem: data processing inequality}), we conclude that $\LExt(\X_1, \Y_{\pre}) \approx_{2\eps_0} \U_m$ (because $\X_1$ has min-entropy $k_1$ required by $\LExt$).

We now show that adding $\LExt(\X_2, \Y_{\pre})$ to the nearly uniform distribution $\LExt(\X_1, \Y_{\pre})$ does not make it lose much of entropy. 
Towards this, we first observe that as $\X_2$ is a flat source with min-entropy exactly $\log^C(n)$, we have that $\abs{\supp(\X_2)} = 2^{\log^C(n)}$.
Using the fact that the domain of $\Y_{\pre}$ is $\zo^d$, we infer that $\abs{\supp(\LExt(\X_2, \Y_{\pre}))} \le 2^{d + \log^C(n)}$.

We apply the chain rule for smooth min-entropy (\cref{lem: chain rule smooth min entropy}) to random variables $\LExt(\X_1, \Y_{\pre})$ and $\LExt(\X_2, \Y_{\pre})$ with parameters $\eps_{\ref{lem: chain rule smooth min entropy}} = 2\eps_0$, $\delta_{\ref{lem: chain rule smooth min entropy}}=\eps_0$, and $\gamma_{\ref{lem: chain rule smooth min entropy}} = \sqrt{2\eps_0}$ to infer that for at least $1 - 6\sqrt{\eps_0}$ fraction of fixings $z_2 \sim \LExt(\X_2, \Y_{\pre})$, it holds that $\LExt(\X_1, \Y_{\pre}) | (\LExt(\X_1, \Y_{\pre}) = z_2)$ has statistical distance at most $\sqrt{2\eps_0}$ 
from a distribution with min-entropy at least $m - d - \log^C(n) - \log(1/\eps_0) = m - d - \log^C(n) - \gamma_{\ref{thm: sumset extractor}}\log(n)$. 
Conditioned on any such fixing, our output distribution becomes $\LExt(\X, \Y_{\pre}) = \LExt(\X_1, \Y_{\pre}) + z_2$ and so it also has statistical distance at most $\sqrt{2\eps_0}$ 
from a distribution with min-entropy at least $m - d - \log^C(n) - \gamma_{\ref{thm: sumset extractor}}\log(n)$.

As $d = O(\log^2(n))$ and $\gamma_{\ref{thm: sumset extractor}}$ is a universal constant, we let $C'$ be a large enough universal constant and $\gamma$ to be a small enough universal constant so that for at least $1 - n^{-\gamma}/2$ fraction of fixings $z_2 \sim \LExt(\X_2, \Y_{\pre})$, it holds that $\LExt(\X, \Y_{\pre}) | (\LExt(\X_1, \Y_{\pre}) = z_2)$ has statistical distance at most $n^{-\gamma}/2$ from a distribution with min-entropy at least $m - \log^{C'}(n)$. From this we conclude that $\LExt(\X, \Y_{\pre})$ has statistical distance at most $n^{-\gamma}$ from a distribution with min-entropy at least $m - \log^{C'}(n)$, as desired. 
\end{proof}

\section{Oblivious degree-2 decision trees}
\label{sec: oblivious deg 2 decision tree}
In this section, we prove maximal lower bounds against oblivious degree-$2$ decision trees, establishing \cref{thm:MainQuadratic}.

\begin{lemma}\label{lem: lb deg 2 decision tree}
Let $\Disp\colon\zo^n\to\zo$ be an $(n, k)$ directional affine disperser.
Then any oblivious degree-$2$ decision tree computing $\Disp$ requires depth at least $n - k + 1$.
\end{lemma}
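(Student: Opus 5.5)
We argue by contradiction, following the overview. Suppose $T$ is an oblivious degree-$2$ decision tree of depth $d \le n-k$ computing $\Disp$. By the composition characterization in \cref{sec:decTrees}, we can write $T(x) = h(Q(x))$. Here $Q=(Q_1,\dots,Q_d)\colon \F_2^n\to\F_2^d$ is a map whose coordinates are multilinear polynomials of degree at most $2$, and $h\colon\zo^d\to\zo$ is arbitrary. The goal is to find a nonzero direction $a$ and an affine subspace $S$ of dimension at least $k$ on which $\Disp(x)+\Disp(x+a)$ is constant. This contradicts \cref{def: directional affine disperser}.

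\textbf{Step 1: choosing the direction.} Since $d\le n-k<n$, the map $Q$ is not injective on $\zo^n$. So there are distinct $u,v$ with $Q(u)=Q(v)$, and we set $a=u+v\neq 0^n$.

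\textbf{Step 2: the derivative map is affine.} Define $L_a(x)=Q(x)+Q(x+a)$. We check coordinatewise that each $Q_i(x)+Q_i(x+a)$ has degree at most $1$. Constant and linear monomials contribute constants under this difference. A quadratic monomial $x_jx_{j'}$ contributes
\[
x_jx_{j'}+(x_j+a_j)(x_{j'}+a_{j'}) = a_{j'}x_j + a_jx_{j'} + a_ja_{j'},
\]
which is affine. Hence $L_a\colon\F_2^n\to\F_2^d$ is an affine map.

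\textbf{Step 3: the subspace.} By the choice of $u,v$ we have $L_a(u)=Q(u)+Q(v)=0^d$. So $S=L_a^{-1}(0^d)$ is nonempty. As the zero set of $d$ affine equations, $S$ is an affine subspace of dimension at least $n-d\ge k$. For every $x\in S$ we get $Q(x)=Q(x+a)$, and therefore
\[
\Disp(x)+\Disp(x+a) = h(Q(x))+h(Q(x+a)) = 0 \quad\text{for all } x\in S.
\]
Thus $\Disp(x)+\Disp(x+a)$ is constant on a subspace of dimension at least $k$, contradicting that $\Disp$ is an $(n,k)$ directional affine disperser. It follows that the depth is at least $n-k+1$.

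The argument is essentially routine. The only point needing care is Step 2, that the discrete derivative of a degree-$2$ polynomial over $\F_2$ is affine (possibly constant), since this is what makes the zero set of $L_a$ an affine subspace. A minor point is that if some $Q_i$ has degree below $2$, its derivative is constant. This only adds either a trivially satisfied equation or, if the constant were $1$, an infeasible one. The latter cannot occur, because $u\in S$ witnesses nonemptiness.
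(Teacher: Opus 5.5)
Your proof is correct and follows the paper's argument essentially verbatim. Both take a collision $Q(u)=Q(v)$ and set $a=u+v\neq 0^n$. Both then observe that $x\mapsto Q(x)+Q(x+a)$ is affine and vanishes at $u$, so its zero set is an affine subspace of dimension at least $n-d\ge k$ on which $\Disp(x)+\Disp(x+a)\equiv 0$. Your explicit monomial computation simply spells out the step the paper asserts without detail.
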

\begin{proof}
Assume that there exists such a tree $T: \zo^n \to \zo$ of depth $d\le n-k$ computing~$\Disp$.  Let $T = h(Q(\cdot))$ where $Q\colon \zo^n \to \zo^d$ is a degree-$2$ map and $h: \zo^d \to \zo$ is an arbitrary function.
As $d < n$, there exist $u, v\in \zo^n$ such that $u\ne v$ and $Q(u) = Q(v)$.
Let $a = u + v$. Since $u\ne v$, we have that  $a\ne 0$. Let $L\colon \zo^n \to \zo^d$ be defined as
\[L(x) = Q(x) + Q(x + a)\;.\]

First we note that $L(u) = Q(u) + Q(v) = 0$.
As $Q$ is a degree $2$ map, $L$ is a degree $1$ map, i.e., an affine map.
Since $0$ lies in the image of the affine map $L$, there exists an affine subspace $U\subseteq \zo^n$ of dimension at least $n - d$ such that $L(U) = 0$.
Hence, for all $x\in U$, it holds that $Q(x) + Q(x + a) = 0$, implying $Q(x) = Q(x + a)$.
Therefore, for all $x\in U$, we obtain that $T(x) = h(Q(x)) = h(Q(x+a)) = T(x+a)$.
However, this implies that for all $x\in U$, $T(x) + T(x + a) = 0$.
As $U$ is a subspace of dimension at least $n - d \ge n - (n - k) = k$, and $a\ne 0$, we get a contradiction.
\end{proof}

We instantiate \cref{lem: lb deg 2 decision tree} with the construction of directional affine dispersers from~\cite{LZ24} (\cref{thm: explicit dir affine disperser}) to prove \cref{thm:MainQuadratic}.
\MainThmQuadratic*
\begin{proof}
We let $\Disp\colon\zo^n\to\zo$ be the $(n, k)$ directional affine disperser from \cref{thm: explicit dir affine disperser} for $k=O(\log(n))$, and apply \cref{lem: lb deg 2 decision tree} to it.
\end{proof}

\section{Conclusion and Open problems} \label{sec: conclusion and open problems}
We conclude by providing open problems for various models that we studied.

\paragraph{Oblivious decision trees.}
\begin{itemize}
    \item Find an explicit function requiring depth overhead of $\Omega(n)$ for oblivious local decision trees. Obtaining such a result for single-output functions would yield an improved lower bound for oblivious bounded-width branching programs.
    \item Prove strong lower bounds (larger than $0.01 n$, obtained from variety dispersers of \cite{LZ19})  for oblivious degree-$d$ decision trees for $d\ge 3$. It is unclear how to directly extend our lower bound for $d=2$ against directional affine dispersers to obtain even a non-trivial lower bound for $d=3$.
\end{itemize}

\paragraph{Non-oblivious decision trees.}
Perhaps the main direction for future work is to extend our lower bounds to \emph{non-oblivious} local decision trees, even with substantially weaker parameters.
\begin{itemize}
    \item The most pressing open problem is to construct an explicit function which requires depth overhead of $\omega(1)$ for non-oblivious local decision trees. We hope our techniques regarding finding sumset structure in local computation can be generalized to obtain such a bound.
    
    \item Extend the lower bound of \cref{thm:MainSingle} or \cref{thm:MainMultiple} (with $m=o(n)$ outputs) to the non-oblivious case. Even for constant $\ell=16$, a size lower bound of $2^{n-o(n)}$ would give us a lower bound of $3.9n-o(n)$ against unbounded-depth circuits---the first major improvement in decades.
    As a stepping stone towards this, it is necessary to obtain a depth lower bound of $n - o(n)$ for $\ell = 16$.
    
    \item Extend the lower bound of \cref{thm:MainSingle} or \cref{thm:MainMultiple} (with $m=o(d)$ outputs) to the non-oblivious setting. Even in the much weaker regime of $\ell=\omega(1)$ and $d=\Omega(n)$, this would imply a super-linear lower bound for series-parallel circuits. Extending this further to $\ell=n^{0.01}$ and $d=\omega(n/\log(\log(n)))$ would imply a super-linear lower bound for log-depth circuits. Either result would resolve a problem open for nearly 50 years~\cite{V77}.
    
    \item Extend the lower bound of \cref{thm:MainQuadratic} to the non-oblivious case. Dispersers for degree-2 variety sources with $o(n)$ dependence on min-entropy would imply a lower bound of $3.11n$ on the size of unbounded-depth circuits, improving the best known bound~\cite{LT22}. Such dispersers will also require depth $n - o(n)$ when computed by degree-2 decision trees. Hence, we view resolving this as a necessary step towards obtaining the improved circuit lower bound. Our maximal lower bounds for oblivious degree-2 decision trees resolve this problem in the oblivious setting.
    
    \item Show that any $\ell$-local decision tree computing Majority requires depth at least $(n/\ell)\cdot \log(\ell)$.
    It has been conjectured that $\Sigma_3^k$ circuits computing Majority (for $k \le \sqrt{n\log(n)}$) require size at least $2^{(n / k)\log(k)}$ \cite{HJP95}. Such a lower bound would yield the above depth lower bound for local decision trees. Hence, we view resolving this question as an important stepping stone towards resolving that conjecture.
    We note that Lecomte et al. showed that any \emph{oblivious} $\ell$-local decision tree computing Majority requires depth at least $(n / \ell)\cdot \log(\ell)$, resolving the question in the oblivious setting~\cite{LRT22}. 
    \item 
    More broadly, it would be interesting to discover more structural properties of local decision trees and find connections to other areas. For instance, the Fourier-analytic structure of degree-$1$ decision trees (parity decision trees) has found  many interesting connections to learning theory, communication complexity and the log-rank conjecture (see, e.g., \cite{KM93, MS24, GTW21}).
\end{itemize}

\section*{Acknowledgements}
We are thankful to Eshan Chattopadhyay for many fruitful discussions on this topic.

\section*{AI Usage Statement}
The article was written entirely by the authors, and all underlying ideas, arguments, and proofs are solely the authors’ own. ChatGPT Pro was used exclusively for proofreading the final version of the paper.

\newpage
\printbibliography

\end{document}